\definecolor{lightgray}{rgb}{0.5,0.5,0.5}
\newcommand{\numtiny}[1]{\footnotesize\raisebox{0.1ex}{#1}}
\newtheorem{theorem}{Theorem}[section]
\newtheorem{lemma}[theorem]{Lemma}
\newtheorem{corollary}[theorem]{Corollary}
\newtheorem{conjecture}{Conjecture}[section]
\newtheorem{observation}{Observation}[section]
\newcommand{\qed}{\hfill\ding{113}}
\newcommand{\Dij}{\textsf{Dij}}
\newcommand{\ARead}{\textsf{ARead}}
\newcommand{\AReadk}{\textsf{ARead}(\textit{k})}
\newcommand{\AWrite}{\textsf{AWrite}}
\newenvironment{proof}{\begin{trivlist}
\item[\hspace{\labelsep}{\bf\noindent Proof: }]}{\qed\end{trivlist}}
\def\itp#1{(\textit{#1}\/)}
\newcommand{\doublespace}{\addtolength{\baselineskip}{.25\baselineskip}}
\begin{document}
\doublespace

\title{Safe Register Token Transfer in a Ring}
\author{Ted Herman, University of Iowa}
\maketitle

\begin{abstract}
A token ring is an arrangement of $n$ processors
that take turns engaging in an activity which 
must be controlled.  A token confers the right 
to engage in the controlled activity.  
Processors communicate with neighbors
in the ring to obtain and release a token. 
The communication mechanism investigated in
this paper is the safe register abstraction, 
which may arbitrarily corrupt a value that 
a processor reads when the operation reading
a register is concurrent with an write operation
on that register by a neighboring processor.
The main results are simple protocols for quasi-atomic
communication, constructed from safe registers.  
A quasi-atomic register behaves atomically 
except that a special $\perp$ value may be 
returned in the case of concurrent read and write 
operations.  Under certain conditions that 
constrain the number of writes and registers, 
quasi-atomic protocols are adequate substitutes
for atomic protocols.  The paper demonstrates how
quasi-atomic protocols can be used to implement 
a self-stabilizing token ring, either by using 
two safe registers between neighboring processors
or by using $O(\lg n)$ safe registers between 
neighbors, which lowers read complexity.     
\end{abstract}

\textsc{Keywords:} \emph{concurrency, atomicity, 
registers, self-stabilization.}  

\textsc{CR Categories:} \emph{H.3.4 Distributed
Systems; D.1.3 Concurrent Programming; 
\\ \hspace*{6em} D.4.1 
Process Management (Synchronization)} 

\textsc{TR Number:}  TR-11-01 Department of Computer Science,
University of Iowa

\begin{flushright}
\emph{The fundamental task in computing is to implement \\
higher-level operations with lower-level ones.} \\ 
--- Leslie Lamport \cite{Lam86c}
\end{flushright}

\section{Introduction}
\label{section:intro}

Among the many qualitative dimensions characterizing distributed
computing are time and communication types.  Time, whether in 
processing rate, duration, or delay of communication operations,
may be modeled synchronously or asynchronously;
types of communication include transient (message passing),
persistent (shared memory) \cite{Lam86b}, and rendezvous \cite{Hoa78}.       
Asynchronous models pose the most challenging problems for reasoning
about program properties, particularly when failures are considered. 
The model of shared objects with prescribed operations 
captures the essence of persistent communication between 
asynchronous processes.  The most primitive 
type of shared object is a register with only two operations, read
and write.  

Emerging large scale platforms, notably multicore architectures 
and cloud computing facilities, motivate relaxed consistency 
operations, nonblocking semantics, and speculative or probabilistic
approaches.  Register abstractions are of potential interest for 
two reasons:  first, registers are wait-free operations which 
are meaningful at both low-level (machine architecture) and 
high-level design (manipulating key-value pairs);  second, the 
literature on registers has explored numerous models of concurrency
restriction and degraded semantics, finding constructions that
overcome deficiencies of unreliable read operations.

Register properties can be axiomatized 
\cite{Mis86,Lam86a,Lam86b}, with several choices 
for behavior during concurrent operations;  different choices lead
to stronger or weaker register types.  The strongest 
type is an atomic register, and the weakest type is 
a safe register.  Atomic registers are most useful for applications, 
because they simplify reasoning in the face of concurrent execution.
Safe registers are most convenient for implementors, because they
have minimal requirements on behavior under concurrent execution.   
A significant literature of protocols and constructions explores
how atomic behavior can be derived from safe registers or other 
shared objects with weak semantics.  Such constructions are typically
complex, from a resource standpoint (many low-level registers
needed to implement a higher-level atomic one) 
or from a verification standpoint.  

\paragraph{Contributions.}
Protocols presented in this paper show how a self-stabilizing
token ring can be implemented using safe registers, which are 
the weakest type in Lamport's register hierarchy \cite{Lam86a,Lam86b}. 
Previous work showed that regular and safe registers suffice 
for communication in a self-stabilizing token ring \cite{DH01};  
the contributions of the new protocols are an improved validation
framework and a construction that uses two safe registers rather
than $O(\lg n)$ safe registers between neighbors.
The safe register protocols for read and write
operations are simple, thanks to the closed-loop nature
of the stabilizing token ring, which inherently limits concurrency.
One may question whether exploiting a concurrency-limiting property
is interesting, since the point of wait-free operations is to allow
unrestricted concurrency.  In fact, many high-level tasks have 
some sequential or concurrency-limiting properties, and it is sensible
to exploit such properties if they simplify lower-level design.  
Moreover, retaining wait-free behavior of low-level operations can
benefit implementation designs (which might use speculation, caching,
and other ideas) even when higher-level tasks are sequential.      

\paragraph{Organization.}  Section \ref{section:prelim} briefly 
reviews terminology for register abstractions and the token
ring, and Section \ref{section:dijatom} casts self-stabilizing
token passing in terms of atomic registers.  Then Section 
\ref{section:dijquasi} introduces quasi-atomicity and a protocol
implementing quasi-atomic operations using safe registers 
(subsection \ref{section:qa}).  Two following sections, 
\ref{section:tworeg} and \ref{section:logreg}, present  
self-stabilizing adaptations of the token ring using the 
quasi-atomic constructions.  Discussion wraps up the 
paper in Section \ref{section:discuss}.

\section{Preliminaries}
\label{section:prelim}

This section informally reviews terminology of 
registers, constructions from registers, and a self-stabilizing
token ring protocol.  An atomic register is a shared object with
two methods, read and write.   The time between invocation of 
a register method and its response can be arbitrary in duration, 
which allows for interleaving of steps from different processors
in an execution.  Two operations are considered to be 
concurrent if the invocation of one occurs in the interval between
invocation and response of the other.   
Formal verification of protocols consists of
mapping an interleaved execution to a linearized history of 
processor steps and register operations in such that each invocation
of a register method is immediately followed by its response in
the history;  the verification arguments in this paper are 
informal, reasoning at the level of operation properties rather
than constructing linearized mappings.  Additional nomenclature
is given in Section \ref{section:dijquasi} 
for reasoning about operation intervals.  In executions without
concurrent register operations, behavior of read is unambiguous:
the response to any read is the value most recently written to
that register.    

To describe atomic behavior
operationally, consider a write invocation $\textsf{W}(x)$ on 
a register \textsf{R} which contains the value $y$ prior to 
$\textsf{W}(x)$, where $x\neq y$.  Value $y$ is called the 
\emph{old} value, and $x$ is the \emph{new} value.  A register
is atomic if any read not concurrent with a write responds with
the most recently written value, and read operations concurrent
with a write responds with either $y$ or $x$, subject to the 
constraint that once a read returns $x$, any subsequent
read also returns $x$.    

A regular register weakens atomicity somewhat:  a read concurrent
with a write may return the old or new value arbitrarily.  A 
safe register weakens atomicity further, only guaranteeing that
a read concurrent with a write returns some value in the domain
of the register (binary, $m$-bit integer, or whatever the capacity
is given for the register).  It is perhaps surprising that safe 
registers could be useful, until one sees that the definitions 
of regular and safe collapse for the case of a single-bit register,
provided that is only written when the current value needs to be changed.   
One way to specify an atomic register is add a constraint to a regular
register:  an atomic register is a regular register without 
new-old inversion, that is, the old value is not returned once 
the new value has been returned in a sequence of read operations
concurrent with a write. 

Register properties become more complex when many processors read
and write the same register.  The notation $m\textrm{W}n\textrm{R}$ 
indicates that $m$ processors, called \emph{writers}, and $n$ 
processors, called \emph{readers}, may concurrently have operations
on the same register.  For the token ring protocols in this paper, 
communication is confined to a ring in which each processor only 
shares registers with neighbors in the ring;  moreover, communication
is unidirectional, because a token is consistently passed from 
each processor to only one other.  The type of register could be 1W1R, 
except that self-stabilization invalidates the assumption that 
a writer's internal state correctly estimates the value of the 
register to be written---such an assumption is important to avoid
writing except when needed to change a value.  Therefore, the protocols
use 1W2R registers, so that the processor writing can
also read that register.   

\section{\Dij\ using Registers}
\label{section:dijatom}

The vehicle for demonstrating quasi-atomic registers is a 
self-stabilizing token ring protocol \cite{Dij73,Dij74}.  This 
protocol is a simple construction that depends on atomic 
communication for the self-stabilization property.  The protocol
was originally expressed as a ring of processes communicating
through shared state variables;  subsequent work adapted the 
protocol to a register model of communication 
\cite{Dol00}\footnote{Lamport also introduced an 
adaptation of \Dij\ to a common shared memory model in \cite{Lam86d}.}.    
A register-based adaptation of this famous protocol is 
shown in Figure \ref{fig:orDij}.  We call this the \Dij\ protocol
in the remainder of the paper.

\begin{figure}[ht]
\hrule\vspace*{2ex}\par
\begin{tabbing}
XXXXXXXX \= xxx \= xxx \= xxx \= xxx \= xxx \= xxx \= xxx \= \kill
\>\numtiny{1}\> $\Dij_i(K)$: \\
\>\numtiny{2}\>\> local variables $x$, $y$ \\
\>\numtiny{3}\>\> do \textit{forever} \\
\>\numtiny{4}\>\>\> $y \;\leftarrow\;$ read output \textsf{R} of $p_{i\ominus 1}$ \\
\>\numtiny{5}\>\>\> if $i\neq 0 \;\wedge\; x\neq y$ then \\
\>\numtiny{6}\>\>\>\> $x \;\leftarrow\; y$  \\
\>\numtiny{7}\>\>\>\> \emph{critical section}  \\
\>\numtiny{8}\>\>\> if $i=0 \;\wedge\; y=x$ then \\
\>\numtiny{9}\>\>\>\> $x \;\leftarrow\; (x+1)\bmod K$ \\
\>\numtiny{10}\>\>\>\> \emph{critical section}  \\
\>\numtiny{11}\>\>\> write $x$ to output register \textsf{R} 
\end{tabbing} 
\hrule
\caption{register-based \Dij\ protocol for $p_i$}
\label{fig:orDij}
\end{figure}

Figure \ref{fig:orDij} describes the behavior of processor $p_0$
(lines 8-10) and the behavior of processors $p_1$--$p_{n-1}$ (lines 5-7). 
The ring uses unidirectional communication, as each processor 
reads a register (line 4) written by the previous processor in
the ring (writing occurs on line 11).  The token abstraction 
is embodied by conditions on lines 5 and 8, which allow a processor
to execute a ``critical section'' representing some activity to be
controlled, like mutual exclusion.  Terms $p_{i\ominus 1}$
and $p_{i\oplus 1}$ denote the previous and next processors, with 
respect to $p_i$, in the ring.  The registers are supposed to be atomic.
Variables $x$, $y$, and the registers may have arbitrary initial 
values, however the domain of all variables and registers is 
confined to the set $\{\,i\;|\; 0\leq i<K\}$, where $K$ is some 
given constant satisfying $K>2n$.  The proof of self-stabilization
for \Dij\ is typically shown by defining a subset of the state-space      
of the ring of processors called the \emph{legitimate} set, showing
that this set is closed under execution (each successor of a 
legitimate state is legitimate), and that it satisfies safety and 
liveness properties (the token perpetually advances in the ring, 
and there is always a single token).  Convergence from an arbitrary
initial state consists of showing (by contradiction) the absence 
of deadlock, \emph{e.g.} that $p_0$ must infinitely often  
execute line 9, and that eventually the assignment on line 9 
obtains a value different from that in any other variable or 
register throughout the ring.  

We suppose in the sequel that the reader is familiar with 
stabilization arguments \cite{Var00,Dol00} for \Dij, 
and confine our task to replacing the atomic registers 
used in Figure \ref{fig:orDij} by 
constructions using safe registers.  This paper does not attempt
to settle fundamental questions about possibility or impossibility
of token circulation using safe registers;  for instance, we do 
not explore the space of algorithms that communicate bidirectionally
between neighbors in the ring, nor do we investigate probabilistic 
register constructions.  Rather, we take the \Dij\ protocol as the 
given structure to implement, and consider how it can be adapted
to safe register communication.

Transforming \Dij\ from its original shared state model to using
atomic link registers is straightforward, and using regular 
registers instead of atomic ones isn't a challenging problem. 
Safe registers, however, require more interesting protocols because
these registers have weak concurrency properties.  
The only guarantee by a safe register is that a read not concurrent with
a write will return the most recently written value.  
A read concurrent with a write can return any value in the 
register's domain, even if the value
being written is already equal to what the register contains.
Two of the difficulties in constructing a transformation are neatly
summarized in the following conjectures. 
\begin{conjecture} \label{conj:1reg} 
Algorithm \Dij\ cannot be implemented using only one 
safe register between $p_i$ and $p_{i\oplus 1}$.
\end{conjecture}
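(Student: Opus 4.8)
Fix a hypothetical implementation in which the directed link $p_{i\ominus 1}\to p_i$ carries a single 1W2R safe register $\textsf{R}_i$, written only by $p_{i\ominus 1}$ and read by $p_i$ and $p_{i\ominus 1}$, and in which the \Dij-level read on line~4 and write on line~11 are realized by deterministic processor-local protocols that issue sequences of physical reads and writes on the registers incident to $p_i$, interleaved with local computation. The plan is an indistinguishability argument. With only one register per link, $p_i$ can never certify that a physical read was \emph{not} concurrent with a physical write; I would exploit this to let an adversarial scheduler make the value returned by $p_i$'s \Dij-level read be \emph{any} element of $\{0,\dots,K-1\}$ it likes, and then use that freedom to carry the ring out of every candidate legitimate set, contradicting closure.

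The central step is the following lemma: from any reachable configuration there is an execution fragment along which $p_i$ completes a \Dij-level read of $\textsf{R}_i$ returning an adversary-chosen value $v$, whatever the register actually holds. To prove it I would argue as follows. Line~11 forces $p_{i\ominus 1}$ to perform a \Dij-level write, hence at least one physical write on $\textsf{R}_i$, in every loop iteration; so the adversary runs the system until $p_{i\ominus 1}$ invokes such a physical write, suspends $p_{i\ominus 1}$ there, and then schedules only $p_i$, running its read protocol to completion while answering \emph{every} physical read of $\textsf{R}_i$ with $v$ --- permissible, since $v$ lies in the register's domain and each such read is concurrent with the suspended write. The response sequence $p_i$ observes, namely $v,v,\dots$, is then identical to one it would observe in a concurrency-free execution in which $\textsf{R}_i$ genuinely holds $v$ and $p_i$ begins its read in the same local state; such an execution exists precisely because arbitrary initial configurations are admissible for a self-stabilizing protocol. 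Since the read protocol is deterministic in its inputs, it returns the same value in both, namely $v$. (If instead the read protocol fails to terminate on the constant response $v$, then $p_i$ never finishes the read, the token cannot pass $p_i$, and the liveness requirement already fails; so the terminating case is the one to rule out.) No shorter feedback path rescues $p_i$: it cannot write $\textsf{R}_i$, and anything it writes to its own output register returns only after circling the whole ring, which the adversary forbids during the finitely many steps of the read.

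Given the lemma, breaking self-stabilization is routine. Start from any legitimate configuration $L$ --- one circulating token, held by some $p_j$ --- and pick a non-token processor $p_i\neq p_0$ (which exists for $n\ge 3$); then $x_i$ equals the value most recently written to $\textsf{R}_i$. Because $K>2n$, some value $v$ is distinct from all $n$ register contents and all of $x_0,\dots,x_{n-1}$; by the lemma, let $p_i$'s next \Dij-level read of $\textsf{R}_i$ return this $v$, leaving the rest of the ring untouched. Then $p_i$ sees $x_i\neq v$, so it copies $v$ into $x_i$, executes its critical section, and writes $v$ to its output register (lines~6, 7, and 11). In the resulting configuration at least two processors hold a token --- the original one at $p_j$, and a new mismatch at $p_i$, since $\textsf{R}_i$ still holds the old value, which differs from $v$ --- so the configuration lies outside any set satisfying the single-token safety property. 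As $L$ was an arbitrary legitimate configuration, no candidate legitimate set is closed; equivalently, splicing this attack into an otherwise-fair run --- whose only non-standard ingredient, a concurrent read returning $v$, is exactly what safe semantics permits --- yields a fair execution in which the single-token invariant is violated infinitely often, so the implementation is not self-stabilizing.

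The step I expect to be the main obstacle is making the lemma watertight against \emph{every} conceivable read protocol: one must rule out that any pattern of repeated reading, local waiting, or use of the writer's own 1W2R read-back could ever let $p_i$ certify that a given read was clean --- a genuine indistinguishability argument in the spirit of Lamport's register proofs. One must also confirm that the value $v$ the adversary needs is at once always deliverable under safe semantics and ``fresh'' enough to create a genuine second token, which is exactly why the hypothesis $K>2n$ enters. A secondary point to check is that the phantom token and the original do not annihilate before the safety violation is manifest; but after the attack the mismatch at $p_i$ is real and persists for at least one round, since $p_{i\ominus 1}$ merely rewrites its unchanged value on line~11 --- which suffices.
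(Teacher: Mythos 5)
First, note that the paper does not prove this statement at all: it is explicitly left as a conjecture, supported only by a one-sentence intuition, and the author states that the paper ``does not attempt to settle fundamental questions about possibility or impossibility of token circulation using safe registers.'' So there is no proof of record to compare yours against; your attempt has to stand on its own, and it currently has a genuine gap.

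The load-bearing step of your key lemma is the claim that line 11 forces $p_{i\ominus 1}$ to perform at least one physical write on $\textsf{R}_i$ in \emph{every} loop iteration, so that the adversary can always find a write to suspend and overlap with $p_i$'s reads. That claim fails for exactly the class of implementations you must rule out. The registers are 1W2R precisely so that the writer can read back its own output register, test whether it already holds the intended value, and skip the physical write when it does --- the paper's own \AWrite\ does this (the early exit on line 5 of its definition), and Section 2 explains that this is the standard device for making safe registers usable. Under such an implementation, in a legitimate configuration with the token away from $p_{i\ominus 1}$, no physical write on $\textsf{R}_i$ ever occurs; every read by your chosen non-token processor $p_i$ is then concurrency-free and returns the true register content, and the adversary cannot deliver the fresh value $v$. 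The paper's own intuition is deliberately weaker --- the writer ``must write to that register in \emph{some} case'' --- and the only forced case is the moment the token actually passes through $p_{i\ominus 1}$ and the value genuinely changes. Your attack therefore has to be restaged at a token passage, and there the real difficulty appears, which your closing paragraph anticipates but does not resolve: you must show that a single unavoidable physical write, against which the reader may deploy any deterministic strategy (repeated scans, voting over reads, waiting), still permits the adversary to force a violation. The germ of the argument is sound --- one long write can be concurrent with unboundedly many reads all returning a constant $v$, and with one register there is no second value stream to cross-check --- but until the ``write every iteration'' premise is replaced by an argument keyed to the writes that are genuinely forced, and until you verify both that termination of the read on the constant response $v$ is compelled by indistinguishability from a clean run and that the corrupted value yields a persistent second token rather than being absorbed, the proof does not go through.
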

The intuition for this conjecture is that a processor with only 
one safe register must write to that register in some case, and
the reader can have unboundedly many reads concurrent with such 
a write operation, each resulting in an arbitrary value.   
\begin{conjecture} \label{conj:1w1r} 
Algorithm \Dij\ cannot be implemented using only 
1W1R safe registers between $p_i$ and $p_{i\oplus 1}$.
\end{conjecture}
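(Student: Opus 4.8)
The plan is to argue by contradiction. Suppose a protocol $\mathcal{A}$ implements \Dij\ using only 1W1R safe registers between each pair of neighbours $p_i,p_{i\oplus 1}$, and construct an execution from which $\mathcal{A}$ never reaches a legitimate configuration. The single structural fact to exploit is the one the paper already isolates: because every link register from $p_i$ to $p_{i\oplus 1}$ is 1W1R, the writer $p_i$ can never read it back. Consequently the sequence of low-level operations $p_i$ issues on its output registers --- which of them it touches, with which values, and in particular whether it issues any write at all on a given execution of line~11 of Figure~\ref{fig:orDij} --- is a deterministic function of $p_i$'s (possibly corrupted) internal state together with the sequence of values $p_i$ has so far obtained by reading its \emph{input} registers. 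This is exactly the observation power granted by the 1W2R construction of Section~\ref{section:tworeg}, and the argument must show that denying it is fatal.

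Step one is a dichotomy over $\mathcal{A}$'s write behaviour during a quiescent stretch. Take an execution that starts in a legitimate configuration and then keeps $p_{i\ominus 1}$'s high-level output fixed for an arbitrarily long interval --- such intervals occur in every legitimate \Dij\ run, since a processor not currently passing a token re-affirms the same value indefinitely --- so that $p_i$ reads a constant input throughout. Either (a) $\mathcal{A}$ has $p_i$ issue low-level writes to \emph{every} one of its output registers infinitely often during this interval, or (b) some output register is eventually never written again during the interval. In case (a) the adversary exploits the fact that a write operation may take arbitrarily long: it stretches these writes so that every low-level read performed by $p_{i\oplus 1}$ overlaps one, whence by the safe-register property each such read may return an adversary-chosen value, and the adversary drives $p_{i\oplus 1}$'s high-level read to return any value sequence it likes --- for instance one that makes $p_{i\oplus 1}$ perceive a fresh token on every cycle and thus spawn tokens without bound, contradicting the single-token safety property \Dij\ must establish (this is the mechanism behind Conjecture~\ref{conj:1reg}, here spread over several registers). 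So we may assume case (b): the value $p_{i\oplus 1}$ ultimately reconstructs for $p_i$'s high-level output is pinned down by the values $p_i$ has committed to its eventually-quiescent output registers.

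Step two, the crux, attacks those committed values. Since they are a deterministic function of $p_i$'s internal state, the adversary corrupts that state so that $p_i$ commits --- or, via a suppressed write, fails to update --- an output that $p_{i\oplus 1}$ will read as a ``wrong'' value. A single such corruption is not enough, because \Dij\ recovers from isolated glitches; so the construction must install mutually consistent corruptions at all $n$ processors simultaneously, producing a ring configuration that (i) the protocol cannot distinguish from a legitimate one, (ii) is a fixed point of the quiescent behaviour above --- no processor's high-level value changes, so no writer ever performs the actual write that would reset its internal bookkeeping --- and (iii) is not legitimate (e.g.\ the guard of line~9 is never enabled at $p_0$, so no fresh value is ever minted and the ``token'' never advances). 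Proving that such a global fixed point exists for every candidate $\mathcal{A}$, and that no fair extension from it ever becomes legitimate, is the part I expect to be hardest; it is also precisely where the 1W2R protocols of Sections~\ref{section:tworeg} and~\ref{section:logreg} escape, since there a writer's very next re-read of its own register exposes the discrepancy and forces a corrective write, collapsing the fixed point.
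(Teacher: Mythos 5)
The first thing to say is that the paper does not prove this statement: it is offered as a \emph{conjecture}, accompanied only by a one-sentence intuition (a 1W1R writer cannot ascertain the value of its own output register, hence must keep rewriting it, hence the reader's low-level reads can all be made concurrent with writes, defying progress). Your case (a) is essentially a fleshed-out version of that intuition, and your case (b) is the natural complementary branch the paper's intuition silently dismisses. So you have not reproduced a proof from the paper --- there is none --- and the question is whether your sketch closes the conjecture. It does not, and you say so yourself.

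The gaps are real. First, your dichotomy is taken over a particular execution interval rather than over protocols: whether $\mathcal{A}$ lands in case (a) or case (b) can depend on the very schedule the adversary is trying to construct, so the argument as stated is circular; an impossibility proof must quantify over all of $\mathcal{A}$'s strategies before the adversary commits to a schedule. Second, in case (a) the adversary does not get to force ``any value sequence it likes'' at the high level --- the reader's decoding function may reject inconsistent low-level results (exactly as \AReadk\ returns $\perp$), in which case the outcome is a liveness failure rather than the token-spawning safety violation you describe; you need to split that case and show that perpetual rejection also contradicts the specification, which in turn requires showing the adversary can sustain the overlaps fairly forever. Third, and decisively, case (b) rests entirely on the existence, for \emph{every} candidate protocol, of a globally consistent corrupted configuration that is simultaneously a fixed point of the quiescent dynamics and illegitimate; this is the whole content of the impossibility, it is exactly where a counting or indistinguishability argument would have to live, and you explicitly leave it unproven. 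As it stands the proposal is a plausible attack plan consistent with the paper's stated intuition, not a proof, and the statement should remain labelled a conjecture.
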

The intuition for this second conjecture is that a processor
cannot ascertain the value of its output register, and therefore
must continually rewrite it, but doing so admits the possibility
of having every register read being concurrent with a write, 
which would defy progress.
\par
If conjecture \ref{conj:1reg} holds, then any transformation will
have $p_i$ write more than one register that $p_{i\oplus 1}$ reads.
Section \ref{section:tworeg} provides a transformation using two
registers for each processor in the ring, which would be optimal
if the conjecture holds.  If conjecture \ref{conj:1w1r} holds, 
then any transformation will allow that the writers of registers 
can also read the values of the
registers they write:  these safe registers are 1W2R registers.  
Since no processor can read and write the 
same register concurrently, any read by $p_i$ of its own output 
register is trivially atomic.  

\section{Quasi-Atomic Registers}
\label{section:dijquasi}

To the standard terminology mentioned in previous sections, 
a variation of the atomicity property is used in protocols of later sections.  
\emph{Quasi-atomic} behavior differs
from atomic behavior only in that a read operation may return the
exception value $\perp$, indicating a ``busy'' condition where the
reader should retry the operation.  Similarly, let a \emph{quasi-regular}
register differ only from a regular register by allowing a read
to respond with $\perp$.  In the absence of concurrency, the special $\perp$
value is not returned by a read.

Reasoning about nonatomic register operations is often explained 
with diagrams and ordering relations.  Diagrams illustrate how 
register operations have duration, and how the time intervals of
the operations are related.  Figure \ref{fig:overlaps} 
shows a typical case of two consecutive write operations, $W$ and $W'$,
both due to some processor $p$ writing to the same register, and 
two consecutive read operations, $R$ and $R'$,
of that register by another processor $p'$.  The figure shows that
$W$ ends before $R'$ begins;  thus $W$ \emph{precedes} 
$R'$, written $W\prec R'$.  We write $W\preceq R$ if   
$W$ starts before $R$ starts:  either $W\prec R$ or 
the two operations are concurrent.  In the case of  
Figure \ref{fig:overlaps}, $W\preceq R$ and $R\preceq W'$.
Protocols for high-level operations usually include numerous register 
operations by each processor.  For instance, the two write operations
of processor $p$ in Figure \ref{fig:overlaps} could be due to some
higher-level procedure call, which has a duration spanning the 
intervals of $W$ and $W'$.  The interval from the start of $W$ to 
the end of $W'$ is said to \emph{contain} the interval of $R$, 
because $W$ begins before $R$ and $W'$ ends after $R$ ends.   

\begin{figure}[ht]
\hrule\vspace*{2ex}\par
\begin{pspicture}(-0.5,-1.0)(8,1)
\begin{psmatrix}[rowsep=5pt,colsep=28pt]
$p$ \quad  & ~ & ~ & ~ & ~ & ~ & ~ & ~ & ~ & ~ & ~ & ~ & ~ \\
$p'$ \quad & ~ & ~ & ~ & ~ & ~ & ~ & ~ & ~ & ~ & ~ & ~ & ~ 
\ncline[arrows=|-|]{1,2}{1,4}^{$W$}
\ncline[arrows=|-|]{1,5}{1,9}^{$W'$}
\ncline[arrows=|-|]{2,3}{2,6}_{$R$}
\ncline[arrows=|-|]{2,7}{2,11}_{$R'$}
\end{psmatrix}
\end{pspicture}
\hrule
\caption{write and read operations may overlap}
\label{fig:overlaps}
\end{figure}
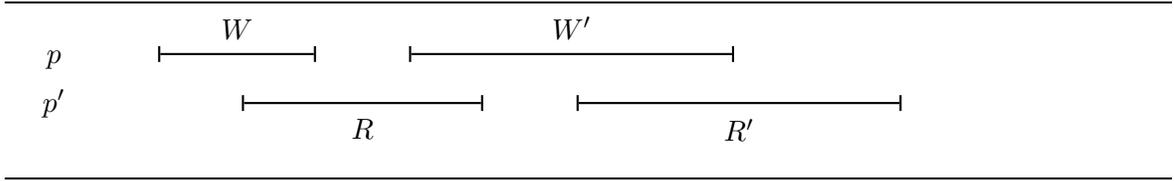

Some simple algebra on the relations between register operation intervals
aids in reasoning about register protocols.  The $\prec$ relation is 
transitive and containment is also transitive.  Concurrency 
is not transitive:  $A$ being concurrent with $B$ and $B$ being 
concurrent with $C$ does not imply that $A$ and $C$ are concurrent.  
The $\preceq$ relation is not transitive, 
however some combinations are transitive, for 
instance $A\preceq B \;\wedge\; B\prec C \;\Rightarrow\; A\preceq C$
holds because $A$ begins before $B$ ends, hence $A$ begins 
before $C$ begins, which implies $A\preceq C$.  The following 
inference about containment is used later in this section.   
\begin{observation}
Suppose $A\preceq B_0$, $B_j\prec B_{j+1}$ for $0\leq j<m$, 
and $B_m\preceq A'$.  Then the time interval from the start of $A$
to the end of $A'$ contains an interval that begins with 
some time instant in $B_0$ and ends with some point in $B_m$.  
\end{observation}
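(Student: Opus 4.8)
The plan is to name the desired sub-interval by its two endpoints and then verify that it has an endpoint in $B_0$, an endpoint in $B_m$, and that it lies inside the interval running from the start of $A$ to the end of $A'$. Take $s$ to be the instant at which $B_0$ begins, and take $t$ to be the earlier of the two instants ``$B_m$ ends'' and ``$A'$ ends''; the candidate interval is $[s,t]$. That $s$ is an instant inside $B_0$ is immediate. For the other endpoint, $t$ is no later than the end of $B_m$ by construction, and it is no earlier than the start of $B_m$: the start of $B_m$ is at most the end of $B_m$, and --- using $B_m\preceq A'$ --- at most the start of $A'$, hence at most the end of $A'$, so it is below both arguments of the minimum. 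Thus $t$ is an instant inside $B_m$.

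It remains to see that $[s,t]$ lies within the interval from the start of $A$ to the end of $A'$, which is what containment means here. The left end is handled by $A\preceq B_0$: $A$ begins no later than $B_0$, i.e. no later than $s$. The right end is immediate from the definition of $t$. The only step needing an argument is that $[s,t]$ is a genuine (possibly degenerate) interval, i.e. $s\le t$. If $m=0$ then $B_0=B_m$, so $s$ is at most the end of $B_0=B_m$, and (via $B_m\preceq A'$) $s=$ start of $B_m$ is at most the end of $A'$; being below both arguments of the minimum, $s\le t$. If $m\ge 1$, transitivity of $\prec$ along $B_0\prec B_1\prec\cdots\prec B_m$ gives $B_0\prec B_m$, so $s$ is at most the end of $B_0$, which is strictly less than the start of $B_m$; and the start of $B_m$ is at most both the end of $B_m$ and (via $B_m\preceq A'$) the end of $A'$, hence at most $t$. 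So $s<t$ in this case. Either way $s\le t$, which completes the proof.

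The main hazard is really notational rather than mathematical: one is tempted to set $t$ to the end of $B_m$ outright, but $B_m\preceq A'$ constrains only which operation \emph{starts} first, not which \emph{ends} first, so $B_m$ could extend past $A'$; clipping $t$ down to the end of $A'$ is exactly the needed fix, and it is harmless because $B_m$ and $A'$ overlap (the start of $B_m$ precedes the end of $A'$). The second point to watch is the degenerate chain $m=0$, where $B_0$ and $B_m$ coincide and the transitivity step is vacuous, so that case must be handled on its own as above.
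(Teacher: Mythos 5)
Your proof is correct. Note that the paper states this Observation without any proof at all, so there is no argument of the author's to compare against; your write-up fills that gap with an explicitly constructed witness interval $[s,t]$, $s=\mathrm{start}(B_0)$ and $t=\min(\mathrm{end}(B_m),\mathrm{end}(A'))$. The two points you flag are exactly the ones that need care: $B_m\preceq A'$ bounds only the \emph{start} of $B_m$ relative to $A'$, so $B_m$ may outlast $A'$ and the right endpoint must be clipped to $\mathrm{end}(A')$ (and your check that the clipped point still lies inside $B_m$, via $\mathrm{start}(B_m)\leq\mathrm{start}(A')\leq\mathrm{end}(A')$, is the step that makes the clipping legitimate); and the chain $B_0\prec\cdots\prec B_m$ is vacuous when $m=0$, so well-definedness of $[s,t]$ needs the separate one-line argument you give. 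The only caveat is cosmetic: the paper's informal definition of ``contains'' (``$W$ begins before $R$ and $W'$ ends after $R$ ends'') could be read with strict inequalities, whereas your containment at the right endpoint may be an equality when $\mathrm{end}(A')\leq\mathrm{end}(B_m)$; this matches the non-strict reading the paper itself uses elsewhere (e.g.\ in Observation 4.2's application), so it is not a gap, but you might state which convention you are adopting.
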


\subsection{Duplicate Writes and $k$-Scan Reads}
\label{section:qa}

We propose here a protocol that, under certain conditions, 
transforms safe registers to quasi-atomic behavior for communication
from $p_i$ to $p_{i\oplus 1}$.  The proposed protocol consists of an 
\AWrite\ procedure invoked by $p_i$ and an \AReadk\ 
procedure invoked by $p_{i\oplus 1}$.  Figure \ref{fig:dwkr} 
shows the two procedures, which use a pair of 1W2R registers between
$p_i$ and $p_{i\oplus 1}$.  An \AWrite(\textit{val}) invocation 
writes \textit{val} to registers $\textsf{R}_a$ and $\textsf{R}_b$, 
but only if these registers do not already both contain \textit{val}.
An \AReadk\ invocation reads both of these registers $k$ times
in succession, returning $\perp$ if not all of the read operations 
yield the same value, and otherwise returning the (unanimous) 
value from the registers.  On one hand, 
value $\perp$ indicates a reading failure,
that is, $\perp$ is returned when it is known that \AReadk\ could
not return a value with atomic read semantics.  On the other hand,
when \AReadk\ does not return $\perp$ we cannot be sure that
the returned value is an atomic read of the latest value from an
\AWrite\ operation.  The following lemma finds a condition 
for which \AReadk\ is quasi-atomic.

\begin{figure}[ht]
\hrule\vspace*{2ex}\par
\begin{tabbing}
XXXXXXXX \= xxx \= xxx \= xxx \= xxx \= xxx \= xxx \= xxx \= \kill
\>\numtiny{1}\> \AWrite(\textit{val}): \\
\>\numtiny{2}\>\> local variables $A$, $B$ \\
\>\numtiny{3}\>\> read from $\textsf{R}_a$ into $A$ \\ 
\>\numtiny{4}\>\> read from $\textsf{R}_b$ into $B$ \\ 
\>\numtiny{5}\>\> if $A=B=\textit{val}$ then return \\
\>\numtiny{6}\>\> else \\
\>\numtiny{7}\>\>\> write \textit{val} to $\textsf{R}_a$ \\ 
\>\numtiny{8}\>\>\> write \textit{val} to $\textsf{R}_b$ \\ 
\>\numtiny{9}\>\>\> return \\ \\
\>\numtiny{10}\> \AReadk: \\
\>\numtiny{11}\>\> local array $A[k]$, $B[k]$ \\
\>\numtiny{12}\>\> for $i=1$ to $k$:  \\
\>\numtiny{13}\>\>\> read from $\textsf{R}_a$ into $A[i]$ \\
\>\numtiny{14}\>\>\> read from $\textsf{R}_b$ into $B[i]$ \\
\>\numtiny{15}\>\> if all of $A[..]$ and $B[..]$ have same value \\
\>\numtiny{16}\>\>\> then return $A[1]$ \\
\>\numtiny{17}\>\> else return $\perp$ 
\end{tabbing} 
\hrule
\caption{duplicate write, $k$-scan read protocol}
\label{fig:dwkr}
\end{figure}

\begin{lemma} \label{lemma:dwkr}
In any execution where $p_i$ invokes \AWrite\ at most $(k-1)$ times,
then every \AReadk\ invocation is a quasi-atomic read by $p_{i\oplus 1}$. 
\end{lemma}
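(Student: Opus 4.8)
The plan is to reduce quasi-atomicity of \AReadk\ to a single fact: every invocation that returns a value different from $\perp$ contains a low-level read that is not concurrent with any low-level write. The hypothesis that $p_i$ invokes \AWrite\ at most $k-1$ times is exactly what forces such a read to exist, via a pigeonhole argument that exploits the $\textsf{R}_a$/$\textsf{R}_b$ alternation in \AReadk.

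First I would record the write structure. The only low-level writes in the system are lines 7--8 of \AWrite, which always write $\textsf{R}_a$ before $\textsf{R}_b$; since $p_i$ runs its \AWrite\ invocations sequentially, \emph{all} low-level writes in the whole execution form one $\prec$-chain $w_1^a\prec w_1^b\prec w_2^a\prec w_2^b\prec\cdots\prec w_m^a\prec w_m^b$ with $m\le k-1$, where $w_l^a,w_l^b$ write the value $\mathit{val}_l$ of the $l$-th invocation; write $\mathit{val}_0$ for the common initial contents of the two registers (the statement presumes they start equal, which is automatic once any \AWrite\ has returned and is the only regime relevant to the stabilizing use of the lemma). The $2k$ reads of a fixed \AReadk\ invocation $AR$ likewise form a $\prec$-chain that alternates $\textsf{R}_a,\textsf{R}_b$. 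Call a read \emph{clean} if it is concurrent with no low-level write; by the safe-register guarantee a clean read returns the current contents of its register. If $AR$ is concurrent with no \AWrite, then no low-level write is in progress during $AR$, so every read of $AR$ is clean and returns the common stable contents of $\textsf{R}_a$ and $\textsf{R}_b$; thus $AR$ returns $\mathit{val}_j$ for the last completed \AWrite\ $W_j$ (or $\mathit{val}_0$), which settles the ``no $\perp$ without concurrency'' clause.

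The heart of the proof is the clean-read claim for an $AR$ returning $v\neq\perp$. Examine the $k$ reads of $\textsf{R}_a$ performed by $AR$. If one of them is clean we are done. Otherwise each is concurrent with some $w_l^a$; since there are only $m\le k-1$ writes to $\textsf{R}_a$, and the reads of $AR$ concurrent with one write interval form a contiguous stretch of $AR$'s $\prec$-chain, pigeonhole forces some $w_l^a$ to be concurrent with two consecutive $\textsf{R}_a$-reads of $AR$, hence to \emph{contain} the $\textsf{R}_b$-read $\rho$ between them. From the global write-chain, $w_{l-1}^b\prec w_l^a$ and $w_l^a\prec w_l^b$, so $w_{l-1}^b\prec\rho\prec w_l^b$ while every other write to $\textsf{R}_b$ is even further from $\rho$ in the chain; hence $\rho$ is clean. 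In either case $v$ equals the stable contents of $\textsf{R}_a$ (resp.\ $\textsf{R}_b$) at a definite instant inside $AR$'s interval, so $v=\mathit{val}_\mu$ for a well-defined $\mu$ with $0\le\mu\le m$, the clean read occurring after $w_\mu^a$ and before $w_{\mu+1}^a$ (resp.\ after $w_\mu^b$, before $w_{\mu+1}^b$).

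It remains to rule out new-old inversion, i.e.\ to linearize the non-$\perp$ reads consistently with real time. Place $AR$'s linearization point ``at'' its clean read; since the write that established $\mathit{val}_\mu$ precedes that read and the next write to the same register follows it, linearizing $AR$ between $W_\mu$ and $W_{\mu+1}$ respects the real-time order of $AR$ and of those two \AWrite\ intervals, and this is consistent for any two reads whose indices are comparable the expected way. I expect the main obstacle to be the remaining case: $AR_1\prec AR_2$, $AR_1$ returning $\mathit{val}_{\mu_1}$, $AR_2$ returning $\mathit{val}_{\mu_2}$, with $\mu_1>\mu_2$ --- which can only happen when the two witnessing clean reads lie on different registers. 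The argument I would give: every read of $AR_1$ precedes $AR_2$'s clean $\textsf{R}_b$-read, which precedes $w_{\mu_1}^b$, so $\textsf{R}_b$ never holds $\mathit{val}_{\mu_1}$ during $AR_1$'s $\textsf{R}_b$-reads; as $AR_1$ is non-$\perp$ those reads must all be concurrent with writes to $\textsf{R}_b$, and the pigeonhole argument applied to $\textsf{R}_b$ then produces a clean $\textsf{R}_a$-read inside $AR_1$ forcing $\mathit{val}_{\mu_1}$ to coincide with an \emph{earlier} written value $\mathit{val}_s$ with $s<\mu_2$; hence $AR_1$ can be linearized between $W_s$ and $W_{s+1}$ and the inversion is only apparent. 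The difficulty underlying all of this is that a safe register read concurrent with a write may return a value that merely looks legitimate --- even one equal to a genuine past or current value --- so no single read can be trusted in isolation; it is exactly the read-alternation of \AReadk\ together with the bound $m\le k-1$ that always extracts one genuinely clean read and keeps a lagging $\textsf{R}_b$-write from yielding a true new-old inversion.
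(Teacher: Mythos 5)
Your proof is correct, and although it rests on the same count that drives the paper's argument ($k$ low-level reads of one register against at most $k-1$ low-level writes to it, exploited through the $\textsf{R}_a$/$\textsf{R}_b$ alternation), it is organized around a different key lemma. The paper builds a bipartite planar graph from low-level reads to writer ``situations,'' shows that a non-$\perp$ response not matching any stable between-writes situation would force the $2k$ reads to map injectively into only $2(k-1)$ write vertices, and then dismisses new-old inversion with a one-line appeal to planarity. You instead extract a single \emph{clean} read --- either a clean $\textsf{R}_a$-read outright, or, when some $w^a_l$ is concurrent with two consecutive $\textsf{R}_a$-reads, the $\textsf{R}_b$-read trapped between them, which is then contained in $w^a_l$ and hence untouched by any $\textsf{R}_b$-write --- and let that one read pin the returned value to the true contents of a register at a definite instant, giving a natural linearization point. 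What your route buys is precisely the part of the paper's proof that is thinnest: the cross-register inversion case ($A$ witnessed on $\textsf{R}_a$, a later $A'$ witnessed on $\textsf{R}_b$) is argued explicitly by running the pigeonhole a second time on $A$'s $\textsf{R}_b$-reads, showing the apparent inversion forces the returned value to collide with an earlier written value; this also absorbs the paper's separate end-of-proof discussion of repeated \AWrite\ arguments. Two minor points: the collision you obtain is $s\le\mu_2$ rather than $s<\mu_2$ (which is all you need), and your standing assumption that $\textsf{R}_a$ and $\textsf{R}_b$ begin equal deserves a remark, since in the stabilizing setting they may initially differ --- though then a clean read still returns a genuine (initial) register value, so the invocation either returns $\perp$ or remains quasi-atomic and nothing breaks.
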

\begin{proof}
The proof begins by showing that any \AReadk\ is quasi-regular, that is, 
it either returns the value of the registers prior to any \AWrite\ 
commencing, or the value of the registers after some \AWrite\ is 
finished and before the next \AWrite\ starts, or the value $\perp$. 
Then this argument is generalized to show that in any    
sequence of \AReadk\ invocations, no new-old inversion occurs.  
We show that any \AReadk\ is regular by contradiction, after 
first introducing a graph to represent the interaction between 
\AWrite\ and \AReadk\ operations on the safe registers.

To disambiguate \AWrite\ invocations that may have 
the same \textit{val} argument (see Figure \ref{fig:dwkr}),  
we assume that each \AWrite\ is invoked
to write a value distinct from all other (at most $k-2$) 
\AWrite\ invocations.  Giving each \AWrite\ 
a different input value from the previous \AWrite\ presents  
a worst case execution with regard to the number of low-level writes.
At the end of the proof, we examine
cases where this assumption does not hold.  

Consider a single write operation to a safe register and a possibly 
concurrent read operation on that register.  Three possibilities are
\itp{i} the read returns the \emph{old} value of the register (that is,
the value that the register holds prior to the write), \itp{ii} 
the \emph{new} value of the register (that is, the register's value 
after the write is complete), or \itp{iii} an arbitrary value returned
because the read operation is concurrent with the write operation.  
A sequence of $(k-1)$ \AWrite\ invocations produces a sequence 
of writes to $\textsf{R}_a$ and $\textsf{R}_b$ registers, 
which we denote as
\begin{eqnarray} \label{eqn:awrite0}
W^1_a \; 
W^1_b \; 
W^2_a \; 
W^2_b \; 
\cdots \;
W^{k-1}_a \; 
W^{k-1}_b 
\end{eqnarray}
For cases \itp{i}--\itp{iii} the values of the register are of 
concern.  Instead of looking at the sequence of write operations,  
we therefore examine the sequence 
\begin{eqnarray} \label{eqn:awrite1}
u^1 \;
w^1_a \; 
v^1 \;
w^1_b \; 
u^2 \;
w^2_a \; 
v^2 \;
w^2_b \; 
u^3 \;
\cdots \;
u^{k-1} \;
w^{k-1}_a \; 
v^{k-1} \;
w^{k-1}_b 
u^k \;
\end{eqnarray} 
which distinguishes all possible \emph{situations} that  
read operations on registers $\textsf{R}_a$ and $\textsf{R}_b$ 
may encounter during an execution.  
Term $u^1$ represents the situation where no writing has begun.
Term $w^1_a$ represents the interval of $W^1_a$, which can yield an 
ambiguous value;
$v^1$ signifies that $W^1_a$ is finished, but $W^1_b$ 
has not started.  Term $w^1_b$ represents the interval of $W^1_b$.  
Term $u^2$ is the situation where $W^1_a$ and $W^1_b$ have finished,
but $W^2_a$ has not yet started. 
Any \AReadk\ operation induces a sequence
of read operations on $\textsf{R}_a$ and $\textsf{R}_b$,         
\begin{eqnarray} \label{eqn:aread0}
R^1_a \; 
R^1_b \; 
R^2_a \; 
R^2_b \; 
\cdots \;
R^{k}_a \; 
R^{k}_b 
\end{eqnarray}
The sequence of read operations (\ref{eqn:aread0}) is related to 
sequence (\ref{eqn:awrite1}).  A convenient portrayal of this 
relation is the following graph.  First, let the  
terms of (\ref{eqn:aread0}) be one set of vertices, and 
the terms of  (\ref{eqn:awrite1}) are another set of vertices.
The relation is given by adding edges between these
two sets to form a bipartite graph induced by 
values returned from read operations.
For example, if $R^2_a$ is concurrent with a write operation in the  
execution and returns a value different from $\textsf{R}_a$'s initial
content and different from any \textit{val} previously 
written to $\textsf{R}_a$,
then there is an edge between $R^2_a$ and some $w^j_a$ vertex.   
If instead $R^2_a$ reads the value between $w^1_a$'s completion and $w^2_a$
starting, there is an edge between $R^2_a$ and one of   
$\{v^1,w^1_b,u^2\}$.  We say that an $R$-vertex \emph{maps to} a 
$v$, $w$, or $u$ vertex according to the constructed graph. 
In addition to edges between vertices of (\ref{eqn:aread0}) and 
(\ref{eqn:awrite1}), let edges also be added to the graph between
successive items in each respective sequence:  $(u^1,w^1_a)$, 
$(w^1_a,v^1)$, \ldots, are edges;  and $(R^1_a,R^1_b)$, $(R^1_b,R^2_a)$, 
\ldots, are edges.  The resulting graph is planar:  the edges
mapping $R$-vertices to vertices from (\ref{eqn:awrite1}) do not
cross (cases (a)-(d) below explain this point).
 
With aid of the bipartite graph between reads and writer situations, 
we return the proof of the lemma, which is an implication, proved
here by contradiction.  A refutation of the lemma supposes an \AReadk\ returns 
an arbitrary non-$\perp$ value, that is, a value that 
does not correspond to any of $\{\,u^i\;|\;1\leq i\leq k\}$; 
terms  $\{\,v^i\;|\;1\leq i<k\}$ represent intermediate points where 
$\textsf{R}_a\neq\textsf{R}_b$, and \AReadk\ would return
$\perp$, giving a contradiction.  It follows that every safe-register 
read operation returns the same arbitrary value $x$, different from 
the value corresponding to any of $\{\,v^i\;|\;1\leq i<k\}$. 
Therefore each term of the form $R^j_a$ maps to a 
vertex in $\{\,w^i_a\;|\;1\leq i<k\}$,   
and each term of the form $R^j_b$ maps to a vertex in 
$\{\,w^i_b\;|\;1\leq i<k\}$.  Sets $\{\,R^j_a\;|\,1\leq j\leq k\}$  
and $\{\,R^j_b\;|\,1\leq j\leq k\}$ each have $k$ vertices, 
whereas $|\{\,w^i_a\;|\;1\leq i<k\}| = k-1$ and    
$|\{\,w^i_b\;|\;1\leq i<k\}| = k-1$.  Some elementary observations
about the ordering of read and write operations constrain mapping
from read operations to (\ref{eqn:awrite1}) vertices, as follows.   
\begin{itemize}
\item[(a)] $R^j_a$ and $R^j_b$ map to distinct vertices because the former
maps to a write of $\textsf{R}_a$ and the latter to a write of $\textsf{R}_b$.
\item[(b)] For $R^j_a$ and $R^j_b$, an edge from $R^j_a$ to $w^m_a$ implies that
the edge from $R^j_b$ to $w^n_b$ satisfies $n\geq m$ by the ordering
of the sequence of write operations.  
\item[(c)] For $R^j_a$ and $R^{\ell}_a$,
$\ell>j$, with $R^j_a$ mapping to $w^m_a$ and $R^{\ell}_a$ mapping to 
$w^n_a$, the sequential ordering of the read operations implies
$n\geq m$ (a similar observation holds for $R_b$ operations).    
\item[(d)] Observation (c) can be strengthened to $n>m$, because between 
any two read operations on $\textsf{R}_a$ there is a read operation
on $\textsf{R}_b$, and observations (a) and (b) constrain the mapping 
targets to be distinct.    
\end{itemize}
By induction, for any $h>j$, $R^h_a$ maps to a vertex distinct from 
the vertices that $R^j_a$ and $R^j_b$ map to.  Since the
number of $R$ vertices is $2k$ and the number of $w$ vertices is 
$2(k-1)$, the distinctness constraint mapping $R$ vertices to 
$w$ vertices implies a contradiction.  This contradiction shows
that any \AReadk\ returns a value that is either the initial 
value of the \textsf{R}-registers or a value that was written 
by some \AWrite\ operation preceding the \AReadk\ or concurrent
with the \AReadk\ operation.  If the value is due to an \AWrite\ 
preceding the \AReadk, then it must be the last such \AWrite, 
because safe registers return the most recently written value
in the absence of concurrency.  Therefore, the protocol is    
quasi-regular.  

Proof of quasi-atomicity consists of showing that ordered 
\AReadk\ invocations do not exhibit new-old inversion. 
Suppose that the sequence of arguments to the $(k-1)$ 
\AWrite\ operations is 
$x^1$, $x^2$, \ldots, $x^{k-1}$ (let $x^0$ be the initial
value of $\textsf{R}_a$ and $\textsf{R}_b$).  
Consider two \AReadk\ invocations $A$, $A'$, such that
$A'$ occurs after $A$, both with non-$\perp$ responses, and
$A'$ returns $x^i$ while $A$ returns $x^j$.  New-old inversion
occurs if $j<i$.  However, $j<i$ contradicts planarity of the 
graph.

The arguments above verify the proof obligation when each 
\AWrite\ has a distinct value;  we now consider executions
where \AWrite\ invocations get repeated values.  The 
simplest scenario is when consecutive \AWrite\ invocations have
the same value:  in such cases, repeated \AWrite\ invocations
are not effective, because line 5 of Figure \ref{fig:dwkr}
is an early exit.  Thus we focus on executions where repeated
\AWrite\ values are not consecutive.  Here, there can be ambiguity
in mapping low-level $R$-vertices to register situations.  However,
the behavior of \AReadk\ in Figure \ref{fig:dwkr} does not depend
on values read (other than returning $\perp$ when values differ),  
thus the mapping under the assumption of uniquely written values
remains valid.  Note that an \AReadk\ concurrent with several 
\AWrite\ operations, say $\mathcal{W}^1$, $\mathcal{W}^2$, 
$\mathcal{W}^3$, could read from low-level writes of $\mathcal{W}^1$ 
and low-level writes of $\mathcal{W}^3$, where both 
$\mathcal{W}^1$ and $\mathcal{W}^3$ are effective and write
the same value $v$.  In such a case, it is possible that the 
\AReadk\ returns $v$, picking up some instances of $v$ from 
$\mathcal{W}^1$ and some from $\mathcal{W}^3$.  This does not 
violate quasi-regular behavior, since the value returned would
be due to a concurrent \AWrite.  The planarity argument for 
successive \AReadk\ operations verifies quasi-atomic behavior.     
\end{proof}
\begin{corollary} \label{corollary:dwkr}
In any execution where each \AReadk\ by $p_{i\oplus 1}$ is 
concurrent with at most $(k-1)$ \AWrite\ operations of $p_i$,  
all of $p_{i\oplus 1}$'s {\AReadk}s are quasi-atomic. 
\end{corollary}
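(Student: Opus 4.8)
The plan is to observe that the proof of Lemma~\ref{lemma:dwkr} is essentially local to each \AReadk\ invocation, so that only concurrency with that particular invocation --- not the global number of \AWrite\ operations of $p_i$ --- controls the counting argument. Fix an \AReadk\ invocation $A$ by $p_{i\oplus 1}$, and let $\mathcal{W}^1,\dots,\mathcal{W}^m$ (with $m\le k-1$) be the \AWrite\ operations of $p_i$ concurrent with $A$, listed in execution order. Every low-level read performed inside $A$ must return one of: the value held by the registers at the instant $A$ begins; or an old, new, or arbitrary value associated with one of $\mathcal{W}^1,\dots,\mathcal{W}^m$. Nothing else is possible, since a safe register is unaffected by a write whose interval is disjoint from the read's interval, so any \AWrite\ that finished before $A$ started influences $A$ only through the register contents it left behind, and any \AWrite\ that starts after $A$ ends influences $A$ not at all. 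First I would argue that the registers hold a single well-defined value $x^0$ when $A$ begins --- the value written by the last \emph{effective} \AWrite\ preceding $A$, or the common initial value if there is none --- unless the start of $A$ already overlaps a low-level write, in which case that write belongs to some $\mathcal{W}^j$ and is already accounted for.

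With that in hand, I would restrict the situation sequence~(\ref{eqn:awrite1}) to the window spanning $\mathcal{W}^1,\dots,\mathcal{W}^m$: it contains at most $k-1$ write pairs, hence at most $2(k-1)$ of the ambiguous vertices $w^j_a,w^j_b$. The bipartite, planar graph between the $2k$ low-level $R$-vertices induced by $A$ and the vertices of this restricted sequence is built exactly as in the lemma, and observations~(a)--(d) apply verbatim to force the $R$-vertices to map injectively into the $w$-vertices whenever $A$ returns a non-$\perp$ value that is neither one of the stable situations $u^\ell$ nor one of the disagreement situations $v^\ell$. Since $2k>2(k-1)$, this is impossible, so $A$ returns either $x^0$ --- which, by safeness under no concurrency, is necessarily the value of the last effective preceding \AWrite\ --- or a value carried by one of the concurrent $\mathcal{W}^j$. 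That is, $A$ is quasi-regular, and this holds for every \AReadk.

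For quasi-atomicity I would rule out new-old inversion across two successive invocations $A\prec A'$ of \AReadk\ (successive because $p_{i\oplus 1}$ runs them sequentially). By the previous paragraph, each of $A,A'$ returns either the register value at its own start instant or a value supplied by an \AWrite\ concurrent with it. Index all of $p_i$'s \AWrite\ operations globally in execution order; because $A\prec A'$, the window of write situations readable by $A'$ begins no earlier than the window readable by $A$, and the analogous ordering constraints --- now applied to the merged planar graph over the read operations of both $A$ and $A'$ together with the \AWrite\ operations straddling them --- prevent the later read $A'$ from returning a value written strictly before the value returned by $A$. This is precisely the planarity obstruction invoked at the end of the lemma's proof, so no inversion occurs and every \AReadk\ of $p_{i\oplus 1}$ is quasi-atomic.

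The main obstacle I anticipate is the bookkeeping at the seam between an \AReadk\ and the \AWrite\ operations that precede but are not concurrent with it: one must show that ``the register contents at $A$'s start'' is a single value attributable to a specific earlier \AWrite\ (or to initialization), handle the case where that \AWrite\ is itself concurrent with the \emph{previous} \AReadk, and --- since $p_i$ may now perform unboundedly many \AWrite\ operations whose \textit{val} arguments repeat --- track ``newer'' and ``older'' strictly by the global order of \AWrite\ operations rather than by the values themselves, so that the distinct-values normalization of the lemma is reused only inside each local window and reconciled correctly when comparing $A$ with $A'$.
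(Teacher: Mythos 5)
Your proposal is correct and takes essentially the same route as the paper's proof, which simply observes that Lemma~\ref{lemma:dwkr}'s graph construction applies to the whole execution and that each \AReadk\ induces a subgraph (your ``restricted window'') satisfying the lemma's hypotheses. You supply considerably more detail than the paper does --- in particular the bookkeeping at the seam with non-concurrent preceding \AWrite{}s and the use of global write order rather than values --- but the underlying argument is the same localization of the counting and planarity reasoning.
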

\begin{proof}
Any finite execution has some number of $p_i$'s \AWrite\ 
operations, and Lemma \ref{lemma:dwkr}'s graph representation 
of low-level register situations and read operations applies
to this execution.  Each \AReadk\ operation comprises a sequence
of low-level reads, which induces a subgraph for which 
the conditions of Lemma \ref{lemma:dwkr} hold.  Thus, each 
\AReadk\ has quasi-atomic behavior.
\end{proof}

With respect to a single \AWrite, an \ARead(1) 
could return a non-$\perp$ value that
is neither the old (the values of $\textsf{R}_a$ and 
$\textsf{R}_b$ before the \AWrite) nor the new 
value;  instead, the \ARead(1) returns an invalid 
value that we call \emph{contaminated}.  The number of 
contaminated \ARead(1) operations following an \AWrite\ 
is limited, and this fact can be exploited in 
protocols.  The following lemma characterizes 
contamination.    

\begin{lemma} \label{lemma:contam}
In any execution where $p_i$ invokes \AWrite\ at most $m\cdot k$ times,
the number of contaminated \AReadk\ operations is at most $m$.
\end{lemma}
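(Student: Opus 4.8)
The plan is to combine the structural facts established in the proof of Lemma~\ref{lemma:dwkr} with a pigeonhole count on effective \AWrite\ operations. Call an \AWrite\ \emph{effective} if it reaches lines~7--8 of Figure~\ref{fig:dwkr} and actually performs low-level writes $W_a,W_b$; an \AWrite\ exiting early at line~5 never affects any concurrent read, so it is harmless, and the number of effective \AWrite\ operations is at most the number of \AWrite\ invocations, hence at most $m\cdot k$. Next observe that a \emph{contaminated} \AReadk\ returns a non-$\perp$ value $x$ that was never the stable content of $\textsf{R}_a$ or $\textsf{R}_b$ (not the initial value, not the argument of any \AWrite). Consequently, in the interaction graph of Lemma~\ref{lemma:dwkr}, none of the $2k$ low-level reads of this \AReadk\ can map to a $u$- or $v$-vertex --- those situations force the read to return a stable, already-written value, whereas all of its reads return the same value $x$ --- so every one of them maps to a $w$-vertex. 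In particular the $k$ reads of $\textsf{R}_a$ map to $k$ distinct vertices $w^{m_1}_a,\dots,w^{m_k}_a$ and the $k$ reads of $\textsf{R}_b$ to $k$ distinct vertices $w^{n_1}_b,\dots,w^{n_k}_b$; hence the \AReadk\ is concurrent with at least $k$ distinct effective \AWrite\ operations. (This is also just the contrapositive of Corollary~\ref{corollary:dwkr}, since a contaminated read is not quasi-atomic.)

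First I would isolate, for a contaminated \AReadk\ $A$, the set $S_A=\{m_1,\dots,m_k\}\cup\{n_1,\dots,n_k\}$ of indices of effective \AWrite\ operations whose low-level writes some read of $A$ maps to, noting $|S_A|\ge k$. The key intermediate step is an ordering principle $(\star)$, established by the same reasoning as observations~(a)--(d) of Lemma~\ref{lemma:dwkr}: if a read $R$ is concurrent with $W^p_b$, a read $R'$ is concurrent with $W^q_a$, and $R\prec R'$, then $q>p$ --- because $W^p_b$ begins before $R$ ends, $R$ ends before $R'$ begins, and $R'$ begins before $W^q_a$ ends, so $W^p_b$ begins before $W^q_a$ ends, which is impossible for $q\le p$ since then $W^q_a$ precedes $W^p_b$ in the low-level write sequence and finishes before it starts. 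Applying $(\star)$ inside $A$ to the consecutive reads $R^i_b\prec R^{i+1}_a$ gives $n_i<m_{i+1}$, so together with $m_i\le n_i$ (observation~(b)) the indices used by $A$ interleave as $m_1\le n_1<m_2\le n_2<\dots<m_k\le n_k$. In particular $\min S_A=m_1$ and $\max S_A=n_k$, with the smallest index witnessed by the very first low-level read $R^1_a$ of $A$ and the largest by its very last low-level read $R^k_b$.

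Next I would show the sets $S_A$ are pairwise disjoint over the contaminated \AReadk\ operations. These operations all belong to the single reader $p_{i\oplus1}$, so they are totally ordered; take two of them, $A\prec A'$. Then the last low-level read of $A$ (a read of $\textsf{R}_b$, concurrent with the write of index $\max S_A$) precedes the first low-level read of $A'$ (a read of $\textsf{R}_a$, concurrent with the write of index $\min S_{A'}$), so $(\star)$ yields $\min S_{A'}>\max S_A$. Hence every index in $S_{A'}$ exceeds every index in $S_A$, and in particular $S_A\cap S_{A'}=\emptyset$; the $S$-sets thus appear as pairwise-disjoint blocks along the sequence of effective \AWrite\ operations. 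Finally, counting: if there are $t$ contaminated \AReadk\ operations, the sets $S_A$ are pairwise disjoint and each has size at least $k$, so there are at least $tk$ effective \AWrite\ operations; since there are at most $mk$, we conclude $t\le m$.

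I expect the disjointness step to be the crux, since it is the one place where the ordering observations and planarity of the interaction graph of Lemma~\ref{lemma:dwkr} must be pushed across two distinct \AReadk\ operations, leaning on the fact that $p_{i\oplus1}$'s reads do not overlap one another. One remaining subtlety to dispatch, flagged at the end of Lemma~\ref{lemma:dwkr}'s proof, is the possibility of equal values carried by distinct effective \AWrite\ operations: an \AReadk\ could then return that common value by harvesting low-level reads from two such writes, but in that case it returns a genuinely written value and is, by definition, not contaminated --- so the argument above, which concerns only \AReadk\ operations that return a never-written value, is unaffected.
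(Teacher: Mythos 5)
Your proposal is correct and follows essentially the same route as the paper: a contaminated \AReadk\ forces all of its low-level reads to be concurrent with distinct writes, planarity (your ordering principle $(\star)$) makes these write-targets disjoint across successive \AReadk\ operations, and a pigeonhole count against the at most $m\cdot k$ writes yields the bound $m$. The only difference is cosmetic --- you count distinct \AWrite\ indices via the sets $S_A$ where the paper counts low-level $w$-vertices, and you spell out the disjointness argument that the paper delegates to planarity.
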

\begin{proof}
Consider the planar graph construction of Lemma \ref{lemma:dwkr} 
representing register situations, applied to the execution from 
the (at most) $m\cdot k$ \AWrite\ operations by $p_i$ and 
some number $t>m$ of \AReadk\ operations invoked by $p_{i\oplus 1}$.   
Looking to find contradiction, suppose $s$ of the \AReadk\ operations
are contaminated, $m<s\leq t$.
For a contaminated \AReadk, in the graph 
all the read operations map to corresponding write operations
representing read concurrent with write, so that these 
read operations return invalid results.  This implies  
that the \AReadk's low-level read
operations map to $k$ distinct vertices, because each
of the $k$ iterations (line 12, Figure \ref{fig:dwkr}) 
scans both $\textsf{R}_a$ and $\textsf{R}_b$, and 
each of these is presumed concurrent with a write to that register. 
The lemma follows because no 
two \AReadk\ invocations have operations mapping to a common 
vertex:  the first \AReadk\ operation is an $R_a$ mapping to
a $w_a$, the last \AReadk\ operation maps to a $w_b$, and 
planarity excludes mapping to common vertices between the first 
and last of these read operations.  From $m\cdot k$   
{\AWrite}s, there are $2m\cdot k$ low-level $w$-vertices, 
and with $s>m$ \AReadk\ operations, there are $s\cdot 2k$ 
$R$-vertices, thus $s>m$ contradicts distinct mapping 
from $R$-vertices to $w$-vertices.
\end{proof}

\section{Two-Register Adaptation of \Dij}
\label{section:tworeg}

The quasi-atomic register protocol of Section \ref{section:qa}
supports transformation of the \Dij\ protocol to the safe register
model.  In the transformed protocol, there are registers
$\textsf{R}_a$ and $\textsf{R}_b$ between each consecutive
pair $p_i$, $p_{i\oplus 1}$, in the ring.  We call the registers
that $p_i$ writes the \emph{output} registers.  
Figure \ref{fig:2rDij} presents the two-register protocol for   
processor $p_i$, $0\leq i<n$.  In this protocol, processors do
not have durable states:  in each cycle of the loop (lines
4-12), processor $p_i$ reads output register $\textsf{R}_a$ into
a local variable (line 4).  The output registers are written by
the \AWrite\ invocation at the end of the cycle (line 12).  
The reading by $p_{i\oplus 1}$ of $p_i$'s output registers occurs
when $p_{i\oplus 1}$ invokes \AReadk\ (line 5).  

\begin{figure}[ht]
\hrule\vspace*{2ex}\par
\begin{tabbing}
XXXXXXXX \= xxx \= xxx \= xxx \= xxx \= xxx \= xxx \= xxx \= \kill
\>\numtiny{1}\> $\Dij_i(\phi,K)$: \\
\>\numtiny{2}\>\> local variables $x$, $y$ \\
\>\numtiny{3}\>\> do \textit{forever} \\
\>\numtiny{4}\>\>\> read from output $\textsf{R}_a$ into $x$ \\
\>\numtiny{5}\>\>\> $y \;\leftarrow\; \ARead(\phi)$ \\
\>\numtiny{6}\>\>\> if $y\neq\perp \;\wedge\; 
		i\neq 0 \;\wedge\; x\neq y$ then \\
\>\numtiny{7}\>\>\>\> $x \;\leftarrow\; y$  \\
\>\numtiny{8}\>\>\>\> \emph{critical section}  \\
\>\numtiny{9}\>\>\> else if $y\neq\perp \;\wedge\; i=0 \;\wedge\; y=x$ then \\
\>\numtiny{10}\>\>\>\> $x \;\leftarrow\; (x+1)\bmod K$ \\
\>\numtiny{11}\>\>\>\> \emph{critical section}  \\
\>\numtiny{12}\>\>\> \AWrite($x$) 
\end{tabbing} 
\hrule
\caption{two register $\Dij(\phi,K)$ protocol for processor $p_i$}
\label{fig:2rDij}
\end{figure}

Two constants need to be set for the protocol, $\phi$ and $K$.  
It is sufficient that $K>2n$, using standard verification 
arguments about \Dij.  Below, we derive a constraint for $\phi$ 
to ensure that \AWrite\ and \ARead($\phi$) invocations 
behave quasi-atomically (a safety property), and later show 
that any sequence of \ARead($\phi$) invocations 
returning $\perp$ is bounded (a progress property).  For the
following lemma, an $\AWrite(x)$ invocation is 
called \emph{effective} in case $x$ differs from the value of 
the output registers;  an ineffective write merely reads the
output registers, finding they already contain $x$, and returns.   

\begin{lemma} \label{lemma:nobogus}
In any execution of the protocol of Figure \ref{fig:2rDij} with
$\phi>2n$, no invocation of \ARead($\phi$) has a contaminated response.
\end{lemma}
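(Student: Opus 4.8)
The plan is to reduce the statement to Corollary~\ref{corollary:dwkr}. A contaminated response is neither quasi-regular nor quasi-atomic, so by that corollary it suffices to show that every \ARead($\phi$) invoked by $p_{i\oplus 1}$ is concurrent with at most $\phi-1$ \emph{effective} \AWrite\ operations of $p_i$; an ineffective \AWrite\ performs no low-level write, line~5 of Figure~\ref{fig:dwkr} being an early exit, so only effective ones count toward the premise of Corollary~\ref{corollary:dwkr}. Since $\phi>2n$, it is in fact enough to bound that number by $2(n-1)$, and I would do so by tracking a ``freezing wave'' around the ring.

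Fix an arbitrary \ARead($\phi$) of $p_{i\oplus 1}$, with interval $I$. The crucial first observation is that, because each processor runs its loop body sequentially, $p_{i\oplus 1}$ performs \AWrite\ (line~12, hence its low-level writes to its own output registers) at no time overlapping $I$ --- line~5 and line~12 of Figure~\ref{fig:2rDij} cannot overlap --- so $p_{i\oplus 1}$'s output registers hold fixed values throughout $I$. For $0\le j\le n-1$ let $\delta_j$ be the number of effective \AWrite\ operations of $p_{i\oplus(1+j)}$ overlapping $I$; then $\delta_0=0$, and since $p_i=p_{i\oplus(1+(n-1))}$ the goal is a bound on $\delta_{n-1}$. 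I would prove $\delta_j\le 2j$ by induction on $j$. For the step, the hypothesis gives $\delta_j\le 2(n-2)<\phi-1$, so by Corollary~\ref{corollary:dwkr} every \ARead($\phi$) of $p_{i\oplus(1+j+1)}$ contained in $I$ is quasi-atomic, hence returns a value that $p_{i\oplus(1+j)}$'s output actually held at a point of $I$. All but the chronologically first effective \AWrite\ of $p_{i\oplus(1+j+1)}$ overlapping $I$ has its triggering \ARead($\phi$) contained in $I$; and between two such consecutive effective \AWrite s the value read from $p_{i\oplus(1+j)}$ changes, so $p_{i\oplus(1+j)}$'s output changed inside $I$, i.e.\ one of its $\delta_j$ effective \AWrite s occurred between them. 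Hence the effective \AWrite s of $p_{i\oplus(1+j+1)}$ overlapping $I$ number at most $\delta_j+1$ of that kind plus at most one whose \ARead($\phi$) straddles or precedes the start of $I$, giving $\delta_{j+1}\le\delta_j+2$. The processor $p_0$ obeys the same recurrence: its effective \AWrite s are increments rather than copies, but between two consecutive effective \AWrite s of $p_0$ with triggering reads inside $I$ the value read from $p_{n-1}$ rises by $1$ modulo $K$ and in particular changes, so $p_{n-1}$'s output must have changed inside $I$, and the same count applies.

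Granting the induction, $\delta_{n-1}\le 2(n-1)=2n-2\le\phi-1$ (using $\phi>2n$), so Corollary~\ref{corollary:dwkr} applies to the chosen \ARead($\phi$) of $p_{i\oplus 1}$: it is quasi-atomic and in particular not contaminated. As this \ARead($\phi$) was arbitrary, no \ARead($\phi$) in the execution is contaminated. The step I expect to be the main obstacle is making the induction watertight: one needs $\delta_j<\phi-1$ precisely in order to invoke Corollary~\ref{corollary:dwkr}, and it is that corollary that forces the intermediate reads to be quasi-atomic and so prevents spurious read values from inflating $\delta_{j+1}$ --- so the bookkeeping must confine all anomalous reads to the single boundary iteration of each processor and confirm that the per-step increase really is at most $2$, small enough not to accumulate past $\phi-1$ around the whole ring, and it must handle $p_0$ separately since $p_0$ is the one processor whose effective writes are not verbatim copies of an upstream value.
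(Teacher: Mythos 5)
Your argument is correct and is essentially the paper's own proof in contrapositive form: the paper assumes a contaminated \ARead($\phi$), deduces $\phi$ concurrent effective {\AWrite}s at the predecessor, and chases a chain of nested intervals around the ring losing two operations per hop until the reader would have to write concurrently with its own read; your induction $\delta_{j+1}\le\delta_j+2$ with $\delta_0=0$ is the same per-hop counting run forwards as a direct bound. Your explicit bootstrapping of quasi-atomicity for the intermediate reads via Corollary~\ref{corollary:dwkr}, and the separate treatment of $p_0$, are if anything slightly more careful than the paper's presentation.
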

\begin{proof}
The lemma is shown by contradiction, assuming that in some 
execution $E$ there is an \ARead($\phi$) by processor $p_i$ 
returning a contaminated value.  The contradiction is demonstrated
by deducing that the contaminated \ARead($\phi$) at $p_i$ is concurrent
with an \AWrite\ invocation, also at $p_i$ (which is impossible
because no processor concurrently invokes both \ARead\ and
\AWrite). 

To set up the contradiction, we consider the first contaminated 
\ARead($\phi$) in $E$, occurring at processor $p_i$, and apply 
Lemma \ref{lemma:dwkr} to infer that processor $p_{i\ominus 1}$ 
invoked at least $\phi$ effective {\AWrite}s, so that each of the 
\ARead($\phi$)'s register operations was concurrent with a 
corresponding write by $p_{i\ominus 1}$.  Figure \ref{fig:situation0}  
depicts the situation, where the vertical dotted lines indicate concurrent
read and write operations; for instance, $r^1_a$ and $w^1_a$ are concurrent.

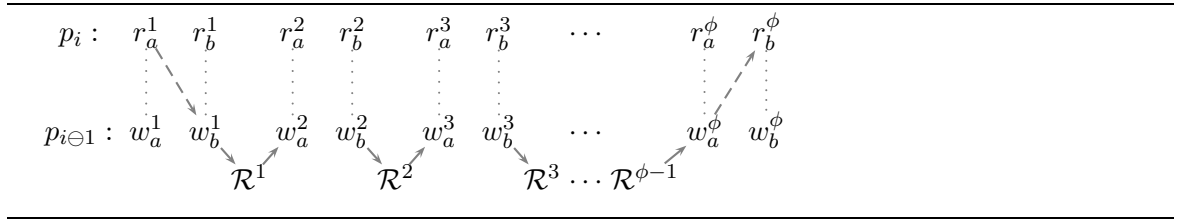
\begin{figure}[ht]
\hrule\vspace*{2ex}\par
\begin{pspicture}(-0.5,-0.5)(8,2)
\begin{psmatrix}[rowsep=5pt,colsep=3pt]
$p_i:$ & \quad & $r^1_a$ & ~ & $r^1_b$ 
		  & ~ & $r^2_a$ & ~ & $r^2_b$ 
		  & ~ & $r^3_a$ & ~ & $r^3_b$ 
                  & ~ & $\cdots$     
		  & ~ & $r^{\phi}_a$ & ~ & $r^{\phi}_b$ \\ \\ 
$p_{i\ominus 1}:$ & & $w^1_a$ & ~ & $w^1_b$ 
		  & ~ & $w^2_a$ & ~ & $w^2_b$ 
		  & ~ & $w^3_a$ & ~ & $w^3_b$ 
                  & ~ & $\cdots$     
		  & ~ & $w^{\phi}_a$ & ~ & $w^{\phi}_b$ \\ 
     & & & & & $\mathcal{R}^1$ 
		& & & & $\mathcal{R}^2$ 
                & & & & $\mathcal{R}^3$ 
                & $\cdots$ & $\mathcal{R}^{\phi-1}$ 
\psset{linestyle=dotted,linecolor=lightgray}
\ncline{3,3}{1,3}\ncline{3,5}{1,5}\ncline{3,7}{1,7}\ncline{3,9}{1,9} 
\ncline{3,11}{1,11}\ncline{3,13}{1,13}\ncline{3,17}{1,17}\ncline{3,19}{1,19}
\psset{linestyle=solid,arrows=->}
\ncline{3,5}{4,6}\ncline{4,6}{3,7}\ncline{3,9}{4,10}\ncline{4,10}{3,11}
\ncline{3,13}{4,14}\ncline{4,16}{3,17}
\psset{linestyle=dashed,arrows=->}
\ncline{1,3}{3,5}\ncline{3,17}{1,19}
\end{psmatrix}
\end{pspicture}
\hrule
\caption{situation for $p_i$ and $p_{i\ominus 1}$}
\label{fig:situation0}
\end{figure}

The figure labels $p_{i\ominus 1}$'s write operations $w^1_a$, $w^1_b$, and 
so on, however it may be that $w^i_a$ and $w^i_b$ do not belong to the same
\AWrite.  The figure is thus unlike the labeling of (\ref{eqn:awrite1}),  
because the labeling $w^1_a$, $w^1_b$, $w^2_a$, \ldots, $w^{\phi}_b$ 
comprise a subsequence of low-level register writes selected for the 
counterexample, to be concurrent with read operations.   There could, 
in fact, be numerous effective \AWrite\ operations 
between $w^i_b$ and $w^{i+1}_a$.  The figure also shows some 
\ARead\ invocations by $p_{i\ominus 1}$, labeled as 
$\mathcal{R}^1$, \ldots, $\mathcal{R}^{\phi-1}$.  This follows from 
the logic of the protocol in Figure \ref{fig:2rDij}, in which 
any \AWrite\ at line 12 is followed by an \ARead\ on line 5.
The arrows between $w$ and $\mathcal{R}$ items in
the figure signify precedence:  
$w^1_b\prec \mathcal{R}^1$, for example.  The 
dashed arrow from $r^1_a$ to $w^1_b$ represents $r^1_a\preceq w^1_b$, 
which holds because $r^1_a$ must end before $w^1_b$ ends so that 
$r^1_b$ can be concurrent with $w^1_b$.  Just as there could be 
numerous {\AWrite}s between successive $w$-vertices in the figure,    
there could be other \ARead\ invocations by $p_{i\ominus 1}$ 
not shown in the figure:  there could be invocations 
that do not return values which would result in 
effective \AWrite\ invocations by $p_{i\ominus 1}$.  
One more observation about the situation of Figure 
\ref{fig:situation0} concerns planarity:  though the low-level 
$w$ and $r$ instances shown may be selected subsequences induced by 
\ARead\ and \AWrite\ operations, the graph of the figure is planar, 
by arguments similar to those given in the proof of Lemma \ref{lemma:dwkr}.
Below, this planarity is implicitly used in arguments about the 
transitivity of precedence. 

A next step in the proof is a deduction about \AWrite\ 
and \ARead\ invocations at $p_{i\ominus 2}$, many of which
are concurrent with the scenario of Figure \ref{fig:situation0};  
a similar deduction can establish concurrency with \AWrite\ and
\ARead\ invocations at $p_{i\ominus 3}$;  more generally, there
is a chain of deductions about concurrency of operations. 
To construct this chain of deductions, 
we depict the scenario
between $p_{i\ominus t}$ and $p_{i\ominus (t+1)}$ in 
Figure \ref{fig:situation1}.

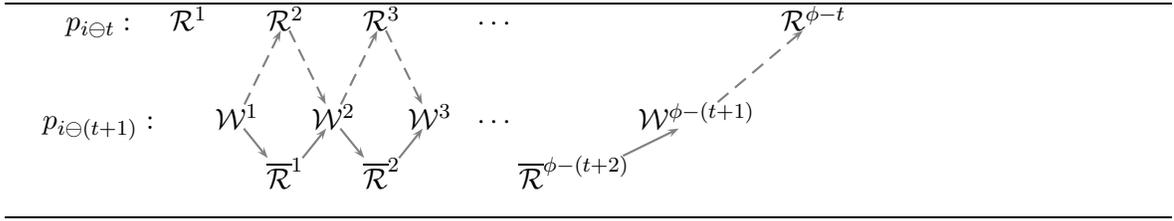
\begin{figure}[ht]
\hrule\vspace*{2ex}\par
\begin{pspicture}(-0.5,-0.5)(8,2)
\begin{psmatrix}[rowsep=5pt,colsep=3pt]
$p_{i\ominus t}:$ & \quad & $\mathcal{R}^1$ & ~ 
                & $\mathcal{R}^2$ & ~ & $\mathcal{R}^3$ 
		& ~ & ~ & $\cdots$ & ~ & ~ & ~ 
		& $\mathcal{R}^{\phi-t}$ \\ \\  
$p_{i\ominus(t+1)}:$ & & & 
        $\mathcal{W}^1$ & & $\mathcal{W}^2$ & & $\mathcal{W}^3$ 
	& & $\cdots$ & & $\mathcal{W}^{\phi-(t+1)}$ \\
	& & & & $\overline{\mathcal{R}}^1$ & & 
        $\overline{\mathcal{R}}^2$ & & & & 
        $\overline{\mathcal{R}}^{\phi-(t+2)}$ 
\psset{linestyle=solid,linecolor=lightgray,arrows=->}
\ncline{3,4}{4,5}\ncline{4,5}{3,6}
\ncline{3,6}{4,7}\ncline{4,7}{3,8}
\ncline{4,11}{3,12}
\psset{linestyle=dashed,arrows=->}
\ncline{3,4}{1,5}
\ncline{3,6}{1,7}
\ncline{3,12}{1,14}
\ncline{1,5}{3,6}
\ncline{1,7}{3,8}
\end{psmatrix}
\end{pspicture}
\hrule
\caption{situation for $p_{i\ominus t}$ and $p_{i\ominus (t+1)}$}
\label{fig:situation1}
\end{figure}

In Figure \ref{fig:situation1}, processor $p_{i\ominus t}$'s first 
\ARead, labeled $\mathcal{R}^1$, is presumed to be a reading of the 
initial registers before $p_{i\ominus(t+1)}$ has written them: 
we suppose this to obtain
the worst case (fewest number of effective {\AWrite}s) for 
$p_{i\ominus(t+1)}$'s behavior.  Thus the first \ARead\ at 
$p_{i\ominus t}$ influenced by $p_{i\ominus(t+1)}$ is $\mathcal{R}_2$, and the
dashed arrow from $\mathcal{W}^1$ to $\mathcal{R}^2$ 
indicates that $\mathcal{W}^1\preceq \mathcal{R}^2$;  
also $\mathcal{R}^2\preceq \mathcal{W}^2$ is 
represented by a dashed arrow, since $\mathcal{R}^2$ 
gets the value written by 
$\mathcal{W}^1$ (and not by $\mathcal{W}^2$, 
because it cannot be that
$\mathcal{W}^2\prec \mathcal{R}^2$).  
The first \AWrite\ $\mathcal{W}^1$ need 
not be preceded by an \ARead\ at $p_{i\ominus(t+1)}$, 
because the initial state of $E$ is arbitrary.  
The {\ARead}s of processor $p_{i\ominus(t+1)}$ are  
denoted as $\overline{\mathcal{R}}$-vertices.

\begin{observation} \label{obs:contain}
Containment properties
accompanying the definitions of $\prec$ and $\preceq$ relations 
enable the following assertion:  an interval 
from some point in $\mathcal{R}^2$ through
some point in $\mathcal{R}^{\phi-t}$ 
contains the interval beginning from the end of $\mathcal{W}^2$ 
through the start of $\mathcal{W}^{\phi-(t+1)}$,  
which contains the interval of $p_{i-(t+1)}$ 
from $\overline{\mathcal{R}}^2$ through
$\overline{\mathcal{R}}^{\phi-(t+2)}$.
\end{observation}

Let $I_t$ denote the interval from $\mathcal{R}^2$ through 
$\mathcal{R}^{\phi-t}$.  Interval $I_{t+1}$ thus goes 
Observation \ref{obs:contain} can
be restated as:  interval $I_t$ contains $I_{t+1}$.  By 
transitivity and a simple induction, interval $I_1$ contains
$I_t$ for $2\leq t<\phi/2$ (each step of the induction decreases
the number of terms by 2).
Therefore, if $\phi\geq 2n$, we deduce that $I_1$ contains $I_n$, which
is an interval of $p_{i\ominus n}=p_i$.  That is the linchpin of 
the proof's argument:  the contradicting scenario implies that $p_i$'s 
reading of a contaminated variable depends on $p_i$ injecting the 
contamination, which would have to continue around the ring.    
In particular, for line 5's \ARead\ to return a contaminated 
value at $p_i$, at least one register read by $p_i$ would have to
be concurrent with a register write by $p_i$ due to the 
\AWrite\ of statement 12, which is not possible.  The assumption 
of a contaminated result at line 5 is thereby contradicted, provided
$\phi\geq 2n$.  
\end{proof}

\begin{corollary}
In any execution of the protocol of Figure \ref{fig:2rDij} with
$\phi>2n$, every invocation of \ARead($\phi$) has quasi-atomic behavior. 
\end{corollary}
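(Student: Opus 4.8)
The plan is to derive quasi-atomicity from two ingredients: quasi-regular behavior, and the absence of new-old inversion across successive \ARead($\phi$) invocations. The first ingredient comes almost immediately from Lemma~\ref{lemma:nobogus}. A read has \emph{contaminated} behavior exactly when it returns a non-$\perp$ value that is neither the old value nor a value written by an \AWrite\ overlapping (or immediately preceding) the read; Lemma~\ref{lemma:nobogus} says no such response occurs when $\phi>2n$. Hence every non-$\perp$ response of an \ARead($\phi$) either equals the value of $\textsf{R}_a$ and $\textsf{R}_b$ at a point of the read's span where no \AWrite\ is concurrent --- and then, by the safe-register property, it is the most recently written value --- or it equals a value written by an \AWrite\ concurrent with the read. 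That is precisely quasi-regular behavior. I would also note, as in the last paragraph of Lemma~\ref{lemma:dwkr}'s proof, that only \emph{effective} \AWrite\ invocations matter: an ineffective \AWrite\ performs no low-level write (line~5 of Figure~\ref{fig:dwkr} is an early exit), so it introduces no concurrency and may be dropped from the analysis.

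For the second ingredient I would reuse the planar bipartite graph of Lemma~\ref{lemma:dwkr}, built over a finite prefix of the execution: one vertex set is the sequence of situations $u^1\,w^1_a\,v^1\,w^1_b\,u^2\cdots$ induced by $p_{i\ominus1}$'s low-level writes to $\textsf{R}_a$ and $\textsf{R}_b$, the other is the sequence of $p_i$'s low-level reads, and edges record which situation each read maps to, together with the successor edges along each sequence. As already observed in the proof of Lemma~\ref{lemma:nobogus}, this graph is planar even though the $w$- and $r$-instances appearing in it may be subsequences selected by the high-level operations. Enumerate the effective \AWrite\ values as $x^0$ (the initial register value), $x^1, x^2, \ldots$; by quasi-regularity each non-$\perp$ \ARead($\phi$) returns some $x^j$, and the low-level reads of that invocation that witness this return map into the stretch of the situation sequence over which $x^j$ is the current value. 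Suppose, for contradiction, that \ARead($\phi$) invocation $A'$ starts after \ARead($\phi$) invocation $A$ finishes, both return non-$\perp$, $A$ returns $x^j$, $A'$ returns $x^i$, and $i<j$ (a new-old inversion). Then every low-level read of $A'$ maps strictly earlier in the situation sequence than the stretch of $x^j$ into which $A$'s reads map, while every low-level read of $A'$ occurs strictly later in the read sequence than every low-level read of $A$; observations (a)--(d) of Lemma~\ref{lemma:dwkr} then force two mapping edges to cross, contradicting planarity. Hence no new-old inversion occurs, and combining this with quasi-regularity gives that every \ARead($\phi$) is quasi-atomic, as claimed.

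I expect the main obstacle to be the bookkeeping in the inversion step, specifically pinning down the correspondence ``$A$ returns $x^j$ $\Rightarrow$ $A$'s read-vertices lie in the $x^j$-stretch of the situation sequence (and no earlier)''. This relies on $A$ being non-contaminated (Lemma~\ref{lemma:nobogus}) and on the monotonicity of the mapping under the sequential order of reads; one must also check that non-consecutive repeated \AWrite\ values --- treated at the end of Lemma~\ref{lemma:dwkr}'s proof --- cannot let a later read ``reach back'' to an earlier occurrence of a value in a way that evades the planarity argument. Once that correspondence is fixed, the crossing contradiction is routine.
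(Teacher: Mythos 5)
Your argument is correct and rests on the same two pillars the paper uses---Lemma~\ref{lemma:nobogus} and the planar bipartite graph of Lemma~\ref{lemma:dwkr}---but you decompose the corollary differently. The paper's own proof is a short appeal to Corollary~\ref{corollary:dwkr}: it claims (citing the \emph{arguments}, not the statement, of Lemma~\ref{lemma:nobogus}) that $p_{i\ominus 1}$ cannot have $\phi$ effective \AWrite\ operations concurrent with a single \ARead($\phi$), so the hypothesis of Corollary~\ref{corollary:dwkr} is met and quasi-atomicity follows wholesale. You bypass Corollary~\ref{corollary:dwkr} entirely: quasi-regularity comes from the literal conclusion of Lemma~\ref{lemma:nobogus} (no contaminated response means every non-$\perp$ return is an old or new value, since a unanimous value not attributable to any stable situation or overlapping write would force every low-level read to be concurrent with a write, which is exactly contamination), and the absence of new-old inversion is re-derived from planarity. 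What your route buys is that it needs only the stated conclusion of Lemma~\ref{lemma:nobogus}; what it costs is redoing the inversion step, and the correspondence you flag---``$A$ returns $x^j$ implies $A$'s reads sit in the $x^j$ stretch''---is genuinely the delicate point, since an individual low-level read concurrent with a write can return $x^j$ by accident; the argument really only needs that \emph{some} read of $A$ maps at or beyond $W^j$, after which monotonicity (observations (c)--(d)) pushes every read of the later $A'$ at least that far, contradicting a return of $x^i$ with $i<j$. The paper is no more explicit here (Lemma~\ref{lemma:dwkr} settles the inversion case with one sentence about planarity), so your treatment is at the same level of rigor, and your remark that ineffective \AWrite\ invocations perform no low-level writes and can be dropped matches the paper's own handling. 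As a minor point in your favor, you state the inversion condition with the indices in the correct orientation ($i<j$ for the later read returning the older value), which the paper's Lemma~\ref{lemma:dwkr} appears to transpose.
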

\begin{proof}
Corollary \ref{corollary:dwkr} establishes the conditions for quasi-atomic
behavior:  if $p_{i\ominus 1}$ invokes \AWrite\ at most $(\phi-1)$ times 
between each of $p_i$'s \ARead($\phi$) operations, then $p_i$'s 
{\ARead}s are quasi-atomic.  Arguments given in Lemma \ref{lemma:nobogus}'s
proof show, by contradiction, that $p_{i\ominus 1}$ cannot have 
$\phi>2n$ effective \AWrite\ operations concurrent with an \ARead($\phi$) 
by $p_i$.  Any \AWrite\ operations not
concurrent with $p_i$'s \ARead($\phi$) have no effect on quasi-atomicity,
as was explained in the proof of Lemma \ref{lemma:dwkr}.     
\end{proof}

\begin{lemma} 
In any execution of the protocol of Figure \ref{fig:2rDij} with
$\phi>2n$, the number of consecutive $\perp$ responses for any 
$p_i$ at line 5 is bounded.
\end{lemma}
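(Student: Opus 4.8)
I would prove the statement by contradiction, taking ``bounded'' to mean that no execution contains an infinite run of consecutive $\perp$ responses at a fixed $p_i$'s line~5 of Figure~\ref{fig:2rDij} --- this is the progress content needed downstream, and a numerical bound follows from it under any assumption limiting low-level operation durations relative to a processor's step rate, which I would not need. Suppose, then, that from some cycle onward every \ARead($\phi$) invocation by $p_i$ returns $\perp$. Two facts anchor the argument. First, by the preceding corollary, with $\phi>2n$ each \ARead($\phi$) is quasi-atomic, hence returns $\perp$ only when one of its low-level reads is concurrent with a low-level write to $p_{i\ominus 1}$'s output registers; past a finite initial transient (during which a processor's two output registers may disagree), such low-level writes occur only inside \emph{effective} \AWrite invocations. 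Second, when $p_i$'s line-5 read yields $\perp$, the guards on lines~6 and~9 both fail, so $x$ retains the value just read from $p_i$'s own $\textsf{R}_a$ on line~4, whence the \AWrite($x$) on line~12 is ineffective (again, past the transient). Thus during the supposed infinite $\perp$-run $p_i$ performs \emph{no} effective \AWrite.

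Next I would propagate activity backward around the ring. Since $p_i$ loops forever it issues infinitely many \ARead($\phi$)s in the run; if $p_{i\ominus 1}$ performed only finitely many effective \AWrites there would be a last low-level write to its output, after which all of $p_i$'s \ARead($\phi$)s would be concurrency-free and non-$\perp$ --- a contradiction. So $p_{i\ominus 1}$ performs infinitely many effective \AWrites. I would then establish the implication ``$p_j$ performs infinitely many effective \AWrites $\Rightarrow$ $p_{j\ominus 1}$ performs infinitely many effective \AWrites.'' For $j\neq 0$: consecutive effective \AWrites of $p_j$ install \emph{distinct} values, since line~4 reads $p_j$'s own $\textsf{R}_a$ with no concurrent writer and so returns exactly the value $p_j$ last wrote there; and the value $y'$ of the later effective \AWrite came from a non-$\perp$ \ARead($\phi$) of $p_{j\ominus 1}$ which, being uncontaminated by Lemma~\ref{lemma:nobogus}, is a genuine value of $p_{j\ominus 1}$'s output at some instant, as was the earlier value $y$; since $y'\neq y$, $p_{j\ominus 1}$'s output changed in between, i.e.\ $p_{j\ominus 1}$ performed an effective \AWrite in between, hence infinitely many. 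For $j=0$: consecutive effective \AWrites of $p_0$ install successive values $(x{+}1)\bmod K$, each enabled only by reading the matching value from $p_{n-1}$, so each forces an effective \AWrite by $p_{n-1}$ in between. Composing the implication $n$ times around the ring yields that $p_i=p_{i\ominus n}$ performs infinitely many effective \AWrites, infinitely many of which fall after the start of the $\perp$-run --- contradicting the first paragraph.

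The main obstacle is the temporal bookkeeping concealed in ``forces an effective \AWrite in between'': one must show that when a quasi-atomic \ARead($\phi$) returns $y$ and a later one returns $y'\neq y$ there is genuinely an effective low-level write of $p_{j\ominus 1}$'s output \emph{temporally between} the two reads, and that these interposed \AWrites, compounded $n$ times around the ring, still land inside the original run rather than drifting off its end. This is exactly the precedence/containment and planarity reasoning developed in the proof of Lemma~\ref{lemma:nobogus} --- its Figures~\ref{fig:situation0}--\ref{fig:situation1} and Observation~\ref{obs:contain}, which produce the nested chain $I_1\supseteq I_2\supseteq\cdots\supseteq I_n$ --- so I would transport that machinery rather than redevelop it. Secondary care goes to disposing of the finite initial transient in which $\textsf{R}_a\neq\textsf{R}_b$ for some processor (so that ``ineffective \AWrite'' genuinely means ``no low-level write''), and to treating $p_0$'s increment rule separately in the chaining step, as indicated above.
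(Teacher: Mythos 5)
Your argument is sound, but it runs the ring in the opposite direction from the paper's proof and carries more machinery than is needed. The paper first rules out the degenerate case that \emph{every} \ARead\ in the execution returns $\perp$, and then, assuming some fixed $p_i$ forever returns $\perp$, propagates \emph{quiescence forward}: since $p_i$'s $x$ freezes, $p_i$ eventually performs no low-level writes, so $p_{i\oplus 1}$'s {\ARead}s eventually encounter no concurrent write, return non-$\perp$, and $p_{i\oplus 1}$ copies (or, in $p_0$'s case, increments) at most once more before it too quiesces; iterating forward around the ring, $p_{i\ominus 1}$ eventually quiesces, whence $p_i$'s reads cannot keep returning $\perp$. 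You instead propagate \emph{activity backward}: $p_{i\ominus 1}$ must perform infinitely many effective {\AWrite}s to keep spoiling $p_i$'s reads, and consecutive effective {\AWrite}s of $p_j$ install distinct values whose source forces an interposed effective \AWrite\ at $p_{j\ominus 1}$; composing $n$ times returns to $p_i$, which writes nothing. The two are essentially contrapositive traversals of the same circularity, and both are valid. The paper's version is the more economical: it needs only the observation that an \ARead\ concurrent with no low-level write cannot return $\perp$, whereas yours additionally invokes quasi-atomicity and non-contamination (Lemma~\ref{lemma:nobogus}) to certify that distinct delivered values witness distinct writes, plus the precedence/containment bookkeeping you correctly flag as needed to place the forced writes inside the $\perp$-run rather than before it. Your route is closer in spirit to the proof of Lemma~\ref{lemma:nobogus} itself; the obstacles you identify are the genuine ones and are surmountable, but the forward quiescence argument avoids them entirely.
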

\begin{proof}
We first show, by contradiction, that no execution can have 
\emph{all} \ARead\ operations return $\perp$:
if all \ARead($\phi$) operations return
$\perp$, then eventually the value of $x$ in Figure \ref{fig:2rDij} 
remains constant for each $p_i$, throughout the execution.  Thus
no \AWrite\ operation is effective, and    
no registers are written throughout the 
execution.  Thereafter, every \ARead($\phi$) encounters no 
concurrent \AWrite;  but this implies all low-level reads by 
any $p_i$ obtain the same value, which contradicts the assumed 
return of $\perp$ shown in Figure \ref{fig:dwkr}.

Now, again by contradiction, we show that no particular $p_i$'s
\ARead\ operations continually return $\perp$.  
If $p_i$ forever returns $\perp$, then 
eventually $p_{i\oplus 1}$ has no effective \AWrite\ operations; 
by induction going around the ring, it follows that 
$p_{i\ominus 1}$ eventually has no effective
\AWrite\ operations.  This contradicts conditions of returning 
$\perp$ in Figure \ref{fig:dwkr}.
\end{proof}

\begin{theorem}
If $\phi>2n$ and $K>2n$, then 
the two-register adaptation of 
$\Dij(\phi,K)$ given in Figure \ref{fig:2rDij} 
is self-stabilizing to mutual exclusion. 
\end{theorem}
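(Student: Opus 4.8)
The plan is to show that, under $\phi>2n$, an execution of the two-register protocol of Figure~\ref{fig:2rDij} projects onto an execution of the atomic-register protocol of Figure~\ref{fig:orDij} once a bounded number of ``idle'' loop iterations are deleted, and then to inherit self-stabilization of \Dij\ from the classical argument, which is what needs $K>2n$. A preliminary point concerns the absence of durable state: processor $p_i$ reloads $x$ from its own output register $\textsf{R}_a$ at line~4. Since no processor reads and writes its own register concurrently, this read is atomic; since only $p_i$ writes $\textsf{R}_a$, and an \AWrite($x$) either finds $\textsf{R}_a$ already holding $x$ (ineffective case) or writes $x$ into both $\textsf{R}_a$ and $\textsf{R}_b$ (effective case), the value read at line~4 of iteration $c+1$ equals the value passed to \AWrite\ at line~12 of iteration $c$. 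An ineffective \AWrite\ requires $\textsf{R}_a=\textsf{R}_b$ already, and the first effective \AWrite\ at $p_i$ makes $\textsf{R}_a=\textsf{R}_b$; hence after one iteration of each processor the two output registers agree and remain in agreement. So the stateless encoding faithfully carries the ``durable'' variable $x$ of Figure~\ref{fig:orDij}, and whenever line~5's \ARead($\phi$) returns $y\neq\perp$ the guards of lines~6 and~9 coincide with those of lines~5 and~8 of Figure~\ref{fig:orDij}.

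Next I would build the reduction. By the corollary immediately following Lemma~\ref{lemma:nobogus}, $\phi>2n$ makes every \ARead($\phi$) quasi-atomic: on any iteration whose line-5 read returns $y\neq\perp$, that $y$ is the value an atomic read of $p_{i\ominus 1}$'s output register would return, and successive such reads by a processor exhibit no new-old inversion. On an iteration returning $\perp$, $p_i$ enters no critical section and (since the output registers already equal $x$) its \AWrite($x$) is ineffective, so the abstract configuration --- the contents of all output registers together with each processor's carried $x$ --- is left unchanged. The lemma bounding the number of consecutive $\perp$ responses shows that each processor has infinitely many non-$\perp$ iterations, so deleting the $\perp$-iterations from the execution yields a fair execution of the atomic-register \Dij\ of Figure~\ref{fig:orDij} with the same constant $K$.

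It is then standard \cite{Var00,Dol00} that \Dij\ with $K>2n$ is self-stabilizing: every execution has a suffix of legitimate configurations in which exactly one token is present and the token circulates fairly around the ring. Applying this to the projected execution and pulling back, the original execution has a suffix in which the output registers and carried variables form a legitimate \Dij\ configuration, a single token exists, and it advances around the ring. Because a critical section is entered only on a non-$\perp$ iteration, and such an iteration of $p_i$ corresponds exactly to $p_i$ holding the token, at most one processor is in its critical section at a time and each processor enters it infinitely often; that is self-stabilization to mutual exclusion.

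\textbf{The main obstacle} is the reduction step: one must verify that the subsequence of low-level register operations a reader actually performs determines the same value an atomic register would yield --- this is precisely where quasi-atomicity (not merely quasi-regularity), together with the absence of contamination guaranteed by Lemma~\ref{lemma:nobogus}, is essential --- and that ineffective writes interleaved with retries genuinely leave the abstract configuration fixed, and that fairness survives the deletion of $\perp$-iterations, which is exactly the content of the bounded-$\perp$ lemma. Once these are in place, the convergence and closure arguments for the legitimate set are inherited verbatim from the atomic case.
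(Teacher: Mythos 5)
Your proposal is correct and follows essentially the same route as the paper's own (much terser) proof: invoke the quasi-atomicity corollary and the bounded-$\perp$ lemma, then reduce to the standard convergence and closure arguments for \Dij\ with $K>2n$. Your explicit treatment of the stateless encoding (reloading $x$ from $\textsf{R}_a$) and of why $\perp$-iterations are ineffective no-ops is a useful elaboration of details the paper leaves implicit, but it is the same argument.
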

\begin{proof}
Having shown that \ARead($\phi$) has quasi-atomic behavior and 
the absence of deadlock (e.g., no $p_i$ continually encounters 
$\perp$ values for \ARead\ operations), 
the standard convergence arguments for 
\Dij\ apply:  $K>2n$ implies that eventually $p_0$ 
obtains a value $x$ that exists
nowhere else in the ring, and this is enough to enforce 
convergence to mutual exclusion.
\end{proof}

\section{$O(\lg n)$-Register Adaptation of \Dij}
\label{section:logreg}

When processor communication using registers and execution is 
asynchronous, the number of reads by $p_i$ from $p_{i\ominus 1}$'s 
output registers per effective write is unbounded:  $p_i$ could
be unboundedly faster than $p_{i\ominus 1}$, hence many reads 
get no new information.  Such scenarios are unavoidable, however
the \Dij\ protocol of Section \ref{section:tworeg} uses many 
reads per effective write even in the best case, because 
$\ARead(\phi)$ scans input registers at least $2n$ times.  The
point of this section is to introduce another \Dij\ adaptation
scans input registers $O(\lg n)$ times in the best case.  This 
can be achieved using $\ARead(2)$ and $O(\lg n)$ registers between
each pair $(p_i,p_{i\oplus 1})$ of processors.  The basis of
the construction is an idea introduced in \cite{DH01}, which uses
a gray code \cite{Gar86} representation of the token.  
The improvement here is a protocol that is simpler to reason about 
than the algorithm of \cite{DH01}, which instead introduces a parity bit 
manipulated in each write operation, and lacks the formal 
structure that Lemma \ref{lemma:contam} provides.

Figure \ref{fig:nrDij} presents the protocol.  Each processor 
$p_i$ writes to an array of registers, managed by the \AWrite/\ARead\ 
construction of Section \ref{section:qa}.  The constant $k$ 
specifies the number of register pairs $(\textsf{R}_a[i],\textsf{R}_b[i])$,
for $0\leq i<k$.  The register pair for index $i$ corresponds to the 
$i^\textrm{th}$ bit in the gray code representation of a token value.    
For arguments about the protocol, let 
$\textsf{R}_{a/b}[i]$ denote the register pair for bit $i$.

The invocation $\ARead_i(2)$ specifies an \ARead(2) invocation
on input pair of registers for bit $i$;  
$\AWrite_i(val)$ similarly specifies the 
output register pair to use for writing.  Function 
$\textsf{gray}_k^{-1}$ used on lines 7 and 10 decodes the $k$-bit 
gray code representation of a non-negative integer; for line 10,     
$\textsf{gray}_k^{-1}$ may encounter a $\perp$ value for one or more
bits.  The convention for such cases is that $\textsf{gray}_k^{-1}$
maps to $\perp$ if $\ARead_i(2)$ returns $\perp$ for any $i$.  

Three iterations process registers, seen on lines 5, 8, and 18.  
Whereas the iterations of lines 5 and 8 go from 0 to $k-1$,  
the iteration of line 18 goes in the reverse order:  this is 
intentional, and simplifies reasoning about the atomicity of 
token transfer in a proof.  

\begin{figure}[ht]
\hrule\vspace*{2ex}\par
\begin{tabbing}
XXXXXXXX \= xxx \= xxx \= xxx \= xxx \= xxx \= xxx \= xxx \= \kill
\>\numtiny{1}\> $\Dij_i(K)$: \\
\>\numtiny{2}\>\> constant $k = \lceil \lg K\rceil$ \\
\>\numtiny{3}\>\> local variables $X[k]$, $Y[k]$, $x$, $y$ \\
\>\numtiny{4}\>\> do \textit{forever} \\
\>\numtiny{5}\>\>\> for $i\in 0..(k-1)$ \\
\>\numtiny{6}\>\>\>\> read from output $\textsf{R}_a[i]$ into $X[i]$ \\
\>\numtiny{7}\>\>\> $x \leftarrow \textsf{gray}^{-1}_k(X)\bmod K$ \\
\>\numtiny{8}\>\>\> for $i\in 0..(k-1)$ \\
\>\numtiny{9}\>\>\>\> $Y[i] \leftarrow \ARead_i(2)$ \\
\>\numtiny{10}\>\>\> $y \;\leftarrow\; \textsf{gray}^{-1}_k(Y)\bmod K$ \\
\>\numtiny{11}\>\>\> if $y\neq\perp \;\wedge\; 
		i\neq 0 \;\wedge\; x\neq y$ then \\
\>\numtiny{12}\>\>\>\> $x \;\leftarrow\; y$  \\
\>\numtiny{13}\>\>\>\> \emph{critical section}  \\
\>\numtiny{14}\>\>\> else if $y\neq\perp \;\wedge\; i=0 \;\wedge\; y=x$ then \\
\>\numtiny{15}\>\>\>\> $x \;\leftarrow\; (x+1)\bmod K$ \\
\>\numtiny{16}\>\>\>\> \emph{critical section}  \\
\>\numtiny{17}\>\>\> $X \leftarrow \textsf{gray}_k(x)$ \\
\>\numtiny{18}\>\>\> for $i\in (k-1)..0$ \\
\>\numtiny{19}\>\>\>\> $\AWrite_i(X[i])$ 
\end{tabbing} 
\hrule
\caption{two register $\Dij(\phi,K)$ protocol for processor $p_i$}
\label{fig:nrDij}
\end{figure}

\begin{wrapfigure}{r}{0.3\columnwidth}
\begin{center}\framebox{
\begin{tabular}{cl}
\textsf{value} & \textsf{bits} \\ \hline
0 & 000 \\
1 & 001 \\
2 & 011 \\
3 & 010 \\
4 & 110 \\
5 & 111 \\
6 & 101 \\
7 & 100 
\end{tabular}}\end{center}
\caption{3-bit gray code}
\label{fig:graycode}
\end{wrapfigure}

The validation of the protocol builds on some simple properties and 
on the definition of a certain type of state in an execution.  Recall
that gray code, like binary arithmetic, orders the bits of its 
representation in order from most significant to least significant. 
Figure \ref{fig:graycode} shows a 3-bit reflected gray code, for
example.  

For the local variables defined on line 3 of Figure \ref{fig:nrDij}, 
and for the register pair $\textsf{R}_{a/b}[i]$, 
the most significant bit (MSB) has the least index.  
Thus $\textsf{R}[k-1]$ represents the least 
significant bit (LSB).  Like standard binary encoding, in a sequence of 
increments of a gray code value, the LSB alternates more frequently
than does the MSB:  $2^k-1$ increments to a $k$-bit 
gray code changes the LSB $2^{k-2}$ times (repeating the sequence of
two 0's, followed by two 1's), whereas the MSB changes only twice.
A useful property of the gray code is that each increment changes 
exactly one bit in the encoding (including rollover from the 
largest representable integer).   

We define a \emph{flash state} to be one where all values for the 
MSB, in any register or any internal variable of 
any processor, are zero.  A \emph{flash event} is the transition from
a flash state to a non-flash state.  A flash event only occurs by 
the step $x \;\leftarrow\; (x+1)\bmod K$ in line 15 of the protocol. 
After a flash event, $p_0$ writes the unique one-valued MSB in the ring.  
A \emph{home state} is one where all values for 
all bits and corresponding internal variables are equal in corresponding
bit positions (different bits may have different values, however a bit
at any position has the same value everywhere).  A legitimate state 
for the protocol is either a home state or a successor of a legitimate
state.  

Some elementary properties of executions originating from a
home state are \itp{i} a home state is reached infinitely often, 
and \itp{ii} all effective \AWrite\ operations are atomic.  Properties
\itp{i}--\itp{ii} can be shown by induction, paralleling standard 
arguments for the \Dij\ protocol.  Thanks to property \itp{i} and 
the definition of a legitimate state, validation of the protocol in 
Figure \ref{fig:nrDij} consists of showing that any execution eventually
reaches a home state.  Property \itp{ii} is technical statement about
the conditions of write and effective \AWrite\ operations:  at most
one processor can be engaged in an effective \AWrite\ at any time 
in an execution of legitimate states, and following the completion of an   
\AWrite\ by $p_i$, processor $p_{i\oplus 1}$ correctly reads the value
before the next effective \AWrite.  The gray coding ensures that only 
one \AWrite\ can be effective in the iteration of lines 18-19 of the
protocol.  

In a legitimate state, a register pair 
$\textsf{R}_{a/b}[i]$ 
are equal except during an \AWrite\ operation, 
which may have written $\textsf{R}_a$ but not yet $\textsf{R}_b$.  
With respect to any state in an execution, 
a register pair is said to be \emph{coherent} if both registers
have the same value or an \AWrite\ operation is underway.
Observe that the procedure defining \AWrite\ in Figure \ref{fig:dwkr} 
ensures that both registers are equal upon completion, whether or not the 
\AWrite\ is effective.  Thus, in any execution, 
after each processor has performed all the 
steps in lines 18-19 of the protocol, it follows that all 
register pairs are coherent for all subsequent states.  

\begin{lemma} 
\label{lemma:gohome}
Any execution starting from a flash state contains a home state.
\end{lemma}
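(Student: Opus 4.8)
The plan is to argue that from a flash state the protocol behaves essentially like atomic \Dij, so a standard convergence argument applies. First I would establish a short "settling" phase: by the remark preceding the lemma, once every processor has completed lines 18--19 at least once, all register pairs $\textsf{R}_{a/b}[i]$ are coherent forever after. So without loss of generality I may assume the execution begins at a flash state in which all register pairs are already coherent, and in which no \AWrite\ is mid-flight (shifting attention past finitely many initial steps). The key structural fact I would extract is that in a flash state the MSB register pair and every MSB internal variable equal $0$ everywhere, and the only way the MSB can ever become $1$ is through line 15 at $p_0$ (the increment); moreover by the gray-code property that increment flips exactly one bit, and a careful case analysis of the reflected gray code shows the MSB flips precisely on the rollover $K-1 \to 0$ and on the single increment that reaches the "halfway" codeword. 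I would note that since all variables/registers lie in $\{0,\dots,K-1\}$ and $K>2n$ (really $k=\lceil\lg K\rceil$), there are at most $2n+1$ distinct MSB-zero codewords, so tokens confined to the MSB-zero region form a bounded pool.

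The second step is to run the classical \Dij\ deadlock-freedom argument inside the flash state. I would show $p_0$ must execute line 15 infinitely often: if not, then eventually no processor changes its $x$, hence (as in the preceding lemma on $\perp$-bounds) eventually no \AWrite\ is effective, registers stabilize, all \ARead(2) calls become concurrency-free and return the true coherent values, and then the standard argument that $\neg(\exists\,\text{token})$ is impossible gives a processor whose guard is enabled, contradicting stasis. So $p_0$ increments infinitely often. But each increment at $p_0$ moves through consecutive gray codewords; after at most $2^{k}$ increments the value cycles, so in particular $p_0$'s value passes through the "halfway" codeword, at which point the MSB of $p_0$'s output becomes $1$ while (by the flash hypothesis, maintained until then) everything else still has MSB $0$. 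That is a flash event, and immediately after it $p_0$ holds the unique $1$-valued MSB in the ring.

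The third step converts "unique $1$-valued MSB" into a home state. Here I would treat the MSB register pair in isolation as a one-bit \Dij\ ring: because the iteration of lines 18--19 is reverse order, $p_i$ writes its MSB ($i=k-1$ index, the LSB-indexed... correction: MSB has index $0$) — more precisely, since the MSB sits at the \emph{least} index, it is written \emph{last} in the reverse loop, so an \AWrite\ that changes only the MSB is atomic and is observed cleanly by $p_{i\oplus1}$ before any further effective write. The unique MSB-$1$ value then circulates exactly once around the ring as a token in the classical sense; by the time it returns to $p_0$, every processor's MSB internal variable and every MSB register equals the value $p_0$ last wrote, so all MSBs agree ring-wide. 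An induction on bit position — the MSB having agreed, the next bit behaves like the MSB of a shorter code over the sub-execution in which the MSB is quiescent — then propagates agreement down to the LSB, yielding a state in which every bit has a single ring-wide value: a home state.

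The main obstacle I expect is the third step's bookkeeping: making rigorous that "the MSB has converged" lets one peel it off and recurse on the remaining $k-1$ bits, since the lower bits are \emph{not} independent registers of a clean sub-ring — they are driven by the same \AWrite\ loop and the gray decode $\textsf{gray}_k^{-1}$ couples them. The honest way to handle this is to avoid literal recursion and instead argue directly at the level of $x$-values: once all MSBs agree and are stable, the reachable token values lie in a fixed half of the code, and within that half the increment-and-circulate dynamics of \Dij\ over $\bmod\,K$ drive the ring to a home state by the usual "$p_0$ eventually writes a value present nowhere else, which then chases every stale value out of the ring" argument, now legitimate because all effective \AWrite s in this regime are atomic (property \itp{ii}) and all \ARead(2)s are quasi-atomic by the corollary to Lemma \ref{lemma:dwkr}. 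I would lean on that value-level argument as the spine of the proof and use the bit-level picture only to locate the flash event.
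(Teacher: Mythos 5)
Your first two steps track the paper's proof closely: the deadlock-freedom contradiction showing that $p_0$ must execute line 15 infinitely often, hence eventually flips the MSB while everything else still holds $0$ (the flash event), is exactly the paper's first half. The gap is in your third step, the conversion of ``unique $1$-valued MSB'' into a home state. You notice that the reverse iteration of lines 18--19 writes the MSB last, but you use this only to conclude that the MSB itself propagates cleanly, leaving the lower bits to be handled by a separate bit-level recursion or a value-level convergence argument. The point you are missing is stronger: because the MSB's \AWrite\ is the \emph{last} low-level write $p_0$ performs in that pass, every less-significant register pair is already quiescent with $p_0$'s final value by the time any reader can observe MSB $=1$. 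Since \ARead\ scans index $0$ first, a processor that reads MSB $=1$ then reads all remaining bits with no concurrent writes, i.e., atomically, and therefore acquires $p_0$'s \emph{entire} value $x$ in one cycle. Inductively around the ring, the single circulation of the flash event already carries full agreement on every bit; when $p_{n-1}$ writes its MSB $=1$, the ring is in a home state. No further convergence dynamics are needed.

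Your fallback value-level argument does not close the gap, for two reasons. First, it is circular: you justify it by appealing to property \itp{ii} (atomicity of effective {\AWrite}s) and to the corollary of Lemma \ref{lemma:dwkr}, but property \itp{ii} is established only for executions originating from a home state --- which is precisely what you are trying to reach --- and the corollary for \ARead(2) requires that at most one \AWrite\ be concurrent with each read, a concurrency bound that has not been established in the pre-home regime (contaminated reads of the lower-order bits are exactly the obstruction that Lemma \ref{lemma:contam} and the counting in Lemma \ref{lemma:goflash} exist to control). Second, the claim that token values remain ``confined to a fixed half of the code'' is not an invariant: further increments by $p_0$ will eventually flip the MSB again, so the regime you want to run standard \Dij\ convergence inside is not closed under the very dynamics you invoke. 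Replacing the third step with the write-order observation above eliminates both problems and is the paper's actual argument.
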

\begin{proof}
We focus on $p_0$'s behavior for the proof.  Only $p_0$ is capable
of changing its most significant bit from zero to one, by the  
assignment of line 15.  All other processors copy input register 
values to output register values.  The proof of the lemma is in 
two parts:  first, we show that $p_0$ eventually does change the 
MSB, that is, that a flash event occurs; 
the second part is to show that a 
home state is reached sometime after the flash event. 
\par
The inevitability of a flash event is shown by contradiction.  
Suppose $p_0$ never changes its most significant bit.  After some
writes of other bits, $p_0$ has no effective writes throughout
some suffix of the execution, because line 15 does not execute 
infinitely often by assumption.  It follows that eventually there
is a suffix where $p_1$'s output registers have the same values
as $p_0$'s output registers, as $p_1$ will copy these values in 
some cycle of the protocol (line 12) --- there cannot be a 
$\perp$-value read when there is no concurrent write by $p_0$.  
By induction, $p_i$ for $0<i<n$ eventually also has the same 
output registers as $p_0$, and no processor will have any 
effective write for the remainder of the execution.  
However, such a condition contradicts   
the condition of line 14 for processor $p_0$, implying that 
a flash event must occur.
\par
A flash event has $p_0$ assigning one to the MSB, thus writing 
$\textsf{R}_a[0]\leftarrow 1$ and $\textsf{R}_b[0]\leftarrow 1$.    
After the \AWrite\ operation at $p_0$ associated with this 
flash event, the MSB of $p_0$ is the only MSB with 1.  In fact,
$p_0$ will not again perform an effective write until this 1
value propagates through the ring (for instance, $p_{n-1}$ 
has 0 for the MSB, and does not engage in an effective write
until it copies 1 from $p_{n-2}$).   Consider the event of 
$p_1$ reading the 1 MSB from $p_0$ by an \ARead(2) operation. 
This \ARead\ has quasi-atomic behavior because the two low-level
writes to $\textsf{R}_a$ and $\textsf{R}_b$ of a single 
\AWrite\ by $p_0$ cannot be concurrent with all four low-level
reads of the \ARead\ operation.  Furthermore, all the \AWrite\ 
operations to less significant bit positions occur \emph{before}
the \AWrite\ of the MSB, which implies that after $p_1$ reads 
1 for the MSB, all the other bits that $p_1$ reads are atomic
and have the values written by $p_0$.  Inductively, this 
argument holds for the transfer of values from $p_i$ to 
$p_{i\oplus 1}$, up to $p_{n-1}$.  Finally, after $p_{n-1}$ 
writes 1 for its MSB, we infer that all values at all positions
are the same throughout the ring, which is a home state. 
\end{proof}
\begin{lemma} 
\label{lemma:goflash}
Any execution contains a flash state.
\end{lemma}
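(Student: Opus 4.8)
The plan is to recognize Lemma~\ref{lemma:goflash} as the convergence property of \Dij, in disguise, applied to the counter carried in the gray-code registers. Three preliminary facts do most of the work. First, $p_0$'s local variable $x$ is monotone modulo $K$: the only assignment to $x$ at $p_0$ is line 15 of Figure~\ref{fig:nrDij}, so along any execution $p_0$'s counter runs through the values $0,1,\ldots,K-1,0,\ldots$ in cyclic order. Second, $p_0$ executes line 15 infinitely often; I would prove this by contradiction as in the first part of Lemma~\ref{lemma:gohome} --- if line 15 fires only finitely often, then past some suffix no processor ever has an effective \AWrite\ (copying propagates around the ring), so no register is written again, so eventually every $\ARead(2)$ is concurrent with no \AWrite\ and returns a common non-$\perp$ value (Figure~\ref{fig:dwkr}), which re-enables line 14 at $p_0$, a contradiction. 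Third, since $k=\lceil\lg K\rceil$ and $K>2n$ we have $2^{k-1}\geq n+1$, and the MSB of $\textsf{gray}_k(x)$ is $0$ exactly when $x\in\{0,\ldots,2^{k-1}-1\}$; in particular whenever $p_0$'s counter equals $0$ its output registers hold the all-zero word, whose MSB is $0$.

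Combining the first two facts, $p_0$'s counter attains the value $0$ at infinitely many points of any execution. I would then invoke the standard convergence argument for \Dij, observing that the projection of the protocol of Figure~\ref{fig:nrDij} onto the full $K$-ary counter value is exactly \Dij\ with $\ARead(2)$ substituted for an atomic read: by Lemma~\ref{lemma:contam} only finitely many $\ARead(2)$ responses are ever contaminated, and a $\perp$ response merely suppresses one loop iteration's update, so past some prefix the counter projection behaves as a genuine \Dij\ run. The known argument then clears all ``false tokens'' and drives the counter configuration to a state in which every processor's copy of $x$ and every output register hold a common value $v$, after which $v$ advances in lock-step around the ring at each increment of $p_0$. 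Since $p_0$ increments infinitely often, $v$ eventually cycles to $0$; at that instant every register and every internal variable has MSB $0$, which is a flash state.

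I expect the main obstacle to be importing the \Dij\ convergence argument faithfully rather than re-deriving it, in the face of two wrinkles. One is contaminated or $\perp$-valued reads in a non-legitimate prefix; this is absorbed by Lemma~\ref{lemma:contam} together with the fact that while $p_0$ is stalled waiting for $p_{n-1}$ to present its value there is no concurrent \AWrite, so the relevant reads are clean. The other is the asynchronous worry that $p_0$ ``races'' ahead and cycles its counter many times before the value $0$ finishes propagating; this is ruled out by the chasing mechanism of line 14 --- $p_0$ cannot increment past a value $u$ until the value it reads from $p_{n-1}$ equals $u$, and that value can only have reached $p_{n-1}$ by being copied successively through $p_1,\ldots,p_{n-1}$, so the ring is forced to follow. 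A lighter variant worth attempting, which would keep this lemma independent of the heavier quasi-atomicity machinery exercised in Lemma~\ref{lemma:gohome}, is to reason directly about the single MSB stream produced by $p_0$: it is the slowest-changing bit, $p_0$ emits runs of at least $2^{k-1}\geq n+1$ consecutive zero-MSB values, and the chasing constraint forces every processor to copy one of them; the delicate point there is the same false-token bookkeeping, now restricted to the MSB.
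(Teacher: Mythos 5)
There is a genuine gap, and it sits exactly where you lean on Lemma~\ref{lemma:contam}. That lemma does not say that only finitely many $\ARead(2)$ responses are ever contaminated; it says that at most $m$ reads are contaminated per $m\cdot k$ invocations of \AWrite. In an infinite execution the token circulates forever, every register pair is written infinitely often, and so contaminated reads can recur indefinitely --- roughly one per two effective writes, forever. Consequently there is no suffix ``past some prefix'' in which the counter projection behaves as a genuine \Dij\ run, and the standard convergence argument cannot simply be invoked. For the same reason your ``chasing'' claim fails: a value that $p_0$ reads from $p_{n-1}$ and that happens to equal $u$ need not have been copied around the ring --- it can be a contaminated read, or a value descended from arbitrary initial register contents or from contaminated reads at intermediate processors. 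Your proposed decomposition (first converge to a common value $v$ everywhere, then cycle $v$ to $0$) essentially proves the whole stabilization theorem first, which is precisely what cannot be done before a flash/home state is established; the paper deliberately orders things the other way around.

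The paper's actual argument is a quantitative one confined to a single MSB half-period. Between an effective $\AWrite(0)$ and the next $\AWrite(1)$ of the MSB by $p_0$, the gray code forces more than $n$ changes of the LSB to be observed by $p_0$ before line~15 can flip the MSB to $1$. An induction around the ring, using Lemma~\ref{lemma:contam} only \emph{locally} (contaminated reads at $p_i$ are at most half of $p_{i\ominus 1}$'s writes in that window), shows that at most $n$ LSB changes can originate from initial values and contamination without $p_0$'s influence. Hence the $(n+1)^{\textrm{st}}$ observed LSB change must come from a value that genuinely circulated the ring starting at $p_0$, and every such copy carries an atomically read MSB equal to $0$; once $p_{n-1}$ writes that $0$, all MSBs are $0$ and a flash state holds. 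If you want to salvage your approach, you need some version of this per-period counting; the global ``contamination is finite'' shortcut is not available.
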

\begin{proof}
Using arguments (based on contradiction) similar to those 
in the proof of the previous lemma, $p_0$ executes line 15 
infinitely often in any execution, so the MSB of $p_0$ changes
throughout the execution.  To show that a flash state occurs, 
we consider $p_0$ invoking an effective \AWrite(0) and deduce
that $p_1$ copies its MSB from $p_0$, then $p_2$ copies its
MSB from $p_1$, and generally $p_{i\oplus 1}$ copies from 
$p_i$, all before $p_0$ invokes \AWrite(1);  this shows that
a flash state is reached, provided the copying of MSBs occurs
in sequence, so that all are zero valued.
\par
After each token increment (line 15), $p_0$ writes the token 
value to output registers and waits until the same value is 
read from $p_{n-1}$.  A property of the gray code is that  
the LSB changes in half of the token increments.  Since $2^k\geq K$, 
the LSB changes at least $K/2>n$ times between consecutive 
\AWrite(0) and \AWrite(1) operations of the MSB.  Put another way,
$p_0$ expects to observe at least $n$ changes of the LSB in 
this period.  The question is, which of these changes are due to 
contaminated reads (\emph{e.g.,} an \ARead\ at $p_0$ concurrent with multiple
\AWrite\ operations by $p_{n-1})$, which are due to LSB values 
initially present in processors other than $p_0$, and which are 
values propagated around the ring, from $p_0$ back to $p_0$.       
By counting these types of changes, we shall bound the number 
of values not propagated around the ring, showing them to be 
at most $n$ in total.
\par
Suppose $p_0$ does not write any registers after the 
\AWrite(0) of the MSB completes;  we count the number of 
LSB changes that $p_0$ \emph{could} observe during the subsequent
execution.  The count is derived inductively, starting with
the number of LSB values observed by $p_1$.  The case for $p_1$ 
is simple because we suppose $p_0$ writes once.  Processor 
$p_1$ may observe an initial value, and then another value that
$p_0$ writes.  We ignore the case of reading $\perp$, because
the protocol of Figure \ref{fig:nrDij}.  Since the LSB is written
at most once by $p_0$, each \ARead(2) by $p_1$ is atomic, so
no contaminated reading occurs.  The conclusion is that $p_1$ 
observes at most two values for the LSB.  Each such observed value 
at $p_1$ could result in an effective \AWrite\ of its LSB. 
\par
Counting the observable values for $p_2$ introduces contaminated
values:  because $p_1$ may write the LSB register pair twice, 
$p_2$ could read a contaminated value, however, Lemma \ref{lemma:contam}
limits to one the number of contaminated reads.  If $p_2$ does
read a contaminated value, it follows that the correct value would
be observed by another \ARead(2).  Another scenario for $p_2$ is 
the absence of contaminated values, in which case $p_2$ may observe
both values written by $p_1$.  The total number of observable values
is three in either scenario:  one for the initial value, followed
by two more observed values due to $p_1$'s writes.   
\par
The induction hypothesis is that $p_i$ may observe at most $i+1$ 
values in the execution where $p_0$ does not write any registers.
Assume that $p_{i-1}$ observes and writes at most $i$ values for
the LSB.  As $p_i$ reads the values it is possible that some (or 
all) of the writes are concurrent with $p_i$'s \ARead(2) operations,
resulting in contaminated reads.  Again, Lemma \ref{lemma:contam} 
limits the number of contaminated values to be at most half the
number of \AWrite\ operators by $p_{i-1}$.  It follows that in 
any scenario, $p_i$ observes at most $i$ values due to $p_{i-1}$'s
writes.  The total number is $i+1$ because $p_i$ can also observe
the initial value of the MSB.    
\par
The conclusion from the induction is that $p_0$ ``observes'' at most 
$n$ changes to the LSB read from $p_{n-1}$ (these would not be actually 
observed because we suppose $p_0$ does not write any registers).   
Note that if all $n$ changes due to initial values and operations 
by $p_1$--$p_{n-1}$ without influence of $p_0$ are observed first at 
$p_0$, before any influence of values written by $p_0$ circulate the
ring, then the MSB at $p_0$ retains the value 0, because more than 
$n$ changes of the LSB are needed to enable \AWrite(1) of the MSB.   
It remains to consider more rapid influence of values written by 
$p_0$ affecting what other processors write.  Any values copied 
directly or indirectly from $p_i$ to $p_{i+1}$ do so only for 
non $\perp$-ARead(2) operations;  and since $p_0$ writes the MSB 
once in the execution under examination, it follows that any 
such copying obtains the value 0 for the MSB.  Therefore, after
$n$ changes to the LSB by $p_{n-1}$, the next change of the LSB
is due to a value circulating the ring, from $p_0$ to $p_{n-1}$.  
Each \ARead(2) operation influenced by $p_0$ values includes 
an atomic reading of the MSB copied from $p_0$, hence the 
$(n+1)^{\textrm{\scriptsize th}}$ change to the LSB is accompanied, if not 
preceded, by $p_{n-1}$ writing 0 to its MSB output pair.  This
establishes a flash state. 
\end{proof}

\begin{theorem}
If $K>2n$, then 
the $O(\lg K)$-register adaptation of 
$\Dij(K)$ given in Figure \ref{fig:nrDij} 
is self-stabilizing to mutual exclusion. 
\end{theorem}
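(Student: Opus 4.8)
The plan is to assemble the usual three ingredients of a self-stabilization proof --- \emph{closure} of the legitimate set, \emph{convergence} to it, and \emph{correctness} (mutual exclusion together with perpetual token circulation) on it --- leaning on the two preceding lemmas and on properties \itp{i}--\itp{ii} already stated for executions that originate from a home state. Closure is immediate from the definition: a legitimate state is a home state or a successor of a legitimate state, so the legitimate set is closed under the protocol's transitions. Convergence chains the lemmas: by Lemma~\ref{lemma:goflash} any execution contains a flash state, and by Lemma~\ref{lemma:gohome} any execution beginning at that flash state contains a home state; since a home state is legitimate, every execution enters the legitimate set and, by closure, remains there. Iterating the two lemmas also shows that home states recur infinitely often along every execution, which is what property~\itp{i} records.

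For correctness on the legitimate set I would invoke property~\itp{ii}: every effective \AWrite\ in a legitimate execution, together with the \ARead(2) by which the next processor eventually observes it, behaves atomically; a $\perp$ response remains possible but merely triggers a harmless retry (line~11 and line~14 both fail, $x$ is unchanged, and the ensuing \AWrite\ is ineffective), and no contaminated value is ever decoded. Consequently, restricted to legitimate states the protocol of Figure~\ref{fig:nrDij} is behaviorally identical to the atomic-register version of \Dij: the map $\textsf{gray}_k$ is merely an injective re-encoding of token values in $\{0,\dots,K-1\}$, and all token transfers are mediated by atomic reads and writes. The classical \Dij\ analysis then applies verbatim --- with $K>2n$ guaranteeing that $p_0$ eventually assigns on line~15 a token value that occurs nowhere else in the ring --- giving the safety property that a suffix of the execution has exactly one token, hence at most one processor in a critical section, and the liveness property that this single token advances around the ring forever (property~\itp{i} rules out a stalled token). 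Combining closure, convergence, and this correctness yields self-stabilization to mutual exclusion.

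The substantive obstacle is the verification of property~\itp{ii}, i.e., showing that the quasi-atomic layer is transparent once a legitimate state is reached. I would argue it inductively, in parallel with the standard \Dij\ invariant: a legitimate state carries exactly one ``token,'' so by the single-bit-change property of the gray code on line~15 --- and the fact that $p_1,\dots,p_{n-1}$ only copy --- at most one of the $k$ register-pair \AWrite\ invocations in lines~18--19 is effective at any instant, and at most one processor is ever mid-\AWrite. With at most one effective \AWrite\ on a given bit pair concurrent with any \ARead(2) on that pair, Corollary~\ref{corollary:dwkr} gives quasi-atomic (hence uncontaminated) behavior for that read, and Lemma~\ref{lemma:contam} supplies the complementary bookkeeping bound when a copier briefly emits two writes of the same pair. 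The one delicate point is cross-bit consistency of the decoded value: here I would exploit the deliberately opposed iteration orders --- line~18 writing LSB-to-MSB, line~8 reading MSB-to-LSB --- to conclude that whenever $p_{i\oplus1}$ reads a freshly changed MSB it has already read the completed new values of all less significant bits, so $\textsf{gray}_k^{-1}(Y)$ is exactly the token $p_i$ wrote. Once property~\itp{ii} is in hand, the remainder is routine assembly of the classical \Dij\ convergence argument.
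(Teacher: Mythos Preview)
Your overall architecture---closure by definition, convergence by chaining Lemmas~\ref{lemma:goflash} and~\ref{lemma:gohome}, then reduction to standard \Dij\ via property~\itp{ii}---matches the paper's proof. The paper does insert one preliminary step you omit: it first passes to a suffix in which all register pairs are coherent (guaranteed once every processor has completed a full pass of lines~18--19), and only then applies Lemma~\ref{lemma:goflash}.

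Your elaboration of property~\itp{ii}, however, contains a reversed inference. You correctly identify the opposed iteration orders, but then claim that when $p_{i\oplus 1}$ reads a freshly changed MSB it ``has already read the completed new values of all less significant bits.'' The MSB sits at index~$0$ and is the \emph{first} entry line~8 reads, so nothing has been read before it. The correct deduction runs the other way: if the reader sees the new MSB then, because the writer wrote the MSB last, all less-significant bits were already written, and the reader's \emph{subsequent} scans of indices $1,\dots,k{-}1$ meet no concurrent write. More to the point, opposed orders are not what carries property~\itp{ii} in legitimate states at all: from a home state each token step changes exactly one gray-code bit, so only one \AWrite\ among lines~18--19 is effective; for that single bit a lone concurrent \AWrite\ makes \ARead(2) quasi-atomic by Corollary~\ref{corollary:dwkr}, and the remaining $k{-}1$ bits are stable irrespective of scan direction. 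The opposed orders are exploited in the convergence argument (Lemma~\ref{lemma:gohome}), not in legitimate-state correctness. Your appeal to Lemma~\ref{lemma:contam} for a copier that ``briefly emits two writes of the same pair'' is likewise misplaced: in a legitimate execution a copier effectively writes each bit at most once per token receipt, so no double-write scenario arises.
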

\begin{proof}
Every execution of the protocol has a suffix in which all
states have coherent registers.  Within such a suffix, 
Lemma \ref{lemma:goflash} is applicable, guaranteeing that 
a flash state eventually occurs.  Subsequently, Lemma 
\ref{lemma:gohome} asserts that a home state will be reached,
whereafter registers behave atomically, because at each 
state the choice of what register pair will next be effectively
written is deterministic, and once the {\AWrite}s of lines 18--19
complete, the result will be atomically read before the next
effective write is enabled.  Thus, standard arguments for \Dij\ 
apply to show safety.   
\end{proof}

\section{Discussion}
\label{section:discuss}

The protocols of Section \ref{section:qa} use well known techniques
for register constructions:  duplicating written values and multiple
scans by the reader are standard fare in the literature.  The 
adaptation in Section \ref{section:tworeg} takes advantage of inherent
limitations on register writing, even for an illegitimate state, of
the \Dij\ protocol.  Section \ref{section:logreg} exploits two more
standard techniques from the literature of register constructions, 
representing a value with bit registers (where safe and regular 
properties coincide) and the idea of ordering writes and reading
scans in opposite directions \cite{Lam86b,Vid88}. 

Ideas for limiting concurrency, particularly in common shared memory
models, include counting or balancing networks and filters in 
mutual exclusion algorithms.  However the technique use here is different, 
being geared to the \Dij\ protocol.  One might therefore consider the 
protocols of this paper to be of very limited use in other contexts.  
However, the history of self-stabilization literature should be 
consulted before such a judgment.  Generalizations of the token ring
lead to wave protocols (propagation of information with feedback), 
and other synchronization or control algorithms.  Several of the 
crucial properties of \Dij\ are enjoyed by other self-stabilizing 
(and non-stabilizing) protocols, including implicit restrictions on 
concurrency.  For instance, for many protocols, quiescence of selected
processes results in deadlock, so there is hope 
that counter-flushing \cite{Var00}
or similar techniques could simplify the adaptation to safe-register 
communication.

There have been relatively few investigations of wait-free self-stabilization
or stabilization in the common shared memory model:  papers appear 
sporadically over the years since \Dij\ first appeared \cite{Lam86d,HPT95,AADDGT10}.
This intersection of topics appears to contain many unresolved questions.

\end{document}